\def\PDFcompatibleLineBreak{\texorpdfstring{\\}{}}
\@nx\else[{#1}]\fi}
\ifthmt@thisistheone\else[{restated}]\fi}
\newtheorem{theorem}{Theorem}
\newtheorem{lemma}{Lemma}
\newtheorem{observation}{Observation}
\newtheorem{definition}{Definition}
\newtheorem{proposition}{Proposition}
\newtheorem{claim}{Claim}
\begin{document}

\title{Parameterized Algorithms for Generalizations of Directed Feedback Vertex Set\thanks{A preliminary version of these results appeared in the Proceedings of the 11th International Conference on Algorithms and Complexity~\cite{GokeMM2019}.}}

\author{Alexander G{\"o}ke \thanks{TU Hamburg, Institute for Algorithms and Complexity, Hamburg, Germany. \texttt{alexander.goeke@tuhh.de}. Supported by DFG grant MN 59/1-1.}
  \and D{\'a}niel Marx\thanks{Max-Planck-Institut f{\"u}r Informatik, Saarbr{\"u}cken, Germany. \texttt{dmarx@mpi-inf.mpg.de}. Supported by ERC Consolidator Grant SYSTEMATICGRAPH (755978).}
  \and Matthias Mnich\thanks{TU Hamburg, Institute for Algorithms and Complexity, Hamburg, Germany. \texttt{matthias.mnich@tuhh.de}. Supported by DFG grant MN 59/4-1.}
  }

\maketitle

\begin{abstract}
  The {\sc Directed Feedback Vertex Set} (DFVS) problem takes as input a directed graph~$G$ and seeks a smallest vertex set~$S$ that hits all cycles in $G$.
  This is one of Karp's~21 $\mathsf{NP}$-complete problems.
  Resolving the parameterized complexity status of DFVS was a long-standing open problem until Chen et al. [STOC 2008, J. ACM 2008] showed its fixed-parameter tractability via a $4^kk! n^{\mathcal{O}(1)}$-time algorithm, where $k = |S|$.

  Here we show fixed-parameter tractability of two generalizations of DFVS:
  \begin{itemize}
    \item Find a smallest vertex set $S$ such that every strong component of $G - S$ has size at most~$s$: we give an algorithm solving this problem in time $4^k(ks+k+s)!\cdot n^{\mathcal{O}(1)}$.
      This generalizes an algorithm by Xiao [JCSS 2017] for the undirected version of the problem.
    \item Find a smallest vertex set $S$ such that every non-trivial strong component of $G - S$ is 1-out-regular: we give an algorithm solving this problem in time $2^{\mathcal{O}(k^3)}\cdot n^{\mathcal{O}(1)}$.
  \end{itemize}
  We also solve the corresponding arc versions of these problems by fixed-parameter algorithms.
  
  \medskip
  \noindent
  \textbf{Keywords.} Fixed-parameter algorithms, directed feedback vertex set.
\end{abstract}


\section{Introduction}
\label{sec:introduction}
The {\sc Directed Feedback Vertex Set} (DFVS) problem is that of finding a smallest vertex set~$S$ in a given digraph $G$ such that $G - S$ is a directed acyclic graph.
This problem is among the most classical problems in algorithmic graph theory.
It is one of the 21 $\mathsf{NP}$-complete problems on Karp's famous list~\cite{Karp1972}.

Consequently, the DFVS problem has long attracted researchers in approximation algorithms.
The current best known approximation factor that can be achieved in polynomial time for $n$-vertex digraphs with optimal fractional solution value $\tau^*$ is\linebreak $\mathcal {O}(\min\{\log\tau^*\log\log\tau^*, \log n\log\log n\})$ due to Seymour~\cite{Seymour1995}, Even et al.~\cite{EvenEtAl1998} and Even et al.~\cite{EvenEtAl2000}.
On the negative side, Karp's $\mathsf{NP}$-hardness reduction shows the problem to be $\mathsf{APX}$-hard, which rules out the existence of a polynomial-time approximation scheme (PTAS) assuming $\mathsf{P}\not=\mathsf{NP}$.
Assuming the Unique Games Conjecture, the DFVS problem does not admit a polynomial-time $\mathcal{O}(1)$-approximation~\cite{GuruswamiEtAl2011,GuruswamiLee2016,Svensson2013}.

The DFVS problem has also received a significant amount of attention from the perspective of parameterized complexity.
The main parameter of interest there is the optimal solution size $k = |S|$.
The problem can easily be solved in time $n^{\mathcal{O}(k)}$ by enumerating all $k$-sized vertex subsets $S\subseteq V(G)$ and then seeking a topological order of $G - S$.
The interesting question is thus whether the DFVS problem is \emph{fixed-parameter tractable} with respect to~$k$, which is to devise an algorithm with run time $f(k)\cdot n^{\mathcal{O}(1)}$ for some computable function~$f$ depending only on $k$.
It was a long-standing open problem whether DFVS admits such an algorithm.
The question was finally resolved by Chen et al. who gave a $4^k k!k^4\cdot \mathcal{O}(nm)$-time algorithm for graphs with $n$ vertices and $m$ arcs.
Recently, an algorithm for DFVS with run time $4^kk! k^5\cdot \mathcal{O}(n+m)$ was given by Lokshtanov et al.~\cite{LokshtanovEtAl2018}.
It is well-known that the \emph{arc} deletion variant is parameter-equivalent to the \emph{vertex} deletion variant and hence {\sc Directed Feedback Arc Set (DFAS)} can also be solved in time $4^kk! k^5\cdot \mathcal{O}(n+m)$.

Once the breakthrough result for DFVS was obtained, the natural question arose how much further one can push the boundary of (fixed-parameter) tractability.
On the one hand, Chitnis et al.~\cite{ChitnisEtAl2015} showed that the generalization of DFVS where one only wishes to hit cycles going through a specified subset of nodes of a given digraph is still fixed-parameter tractable when parameterized by solution size.
On the other hand, Lokshtanov et al.~\cite{LokshtanovEtAl2020} showed that finding a smallest set of vertices of hitting only the \emph{odd} directed cycles of a given digraph is $\mathsf{W}[1]$-hard, and hence not fixed-parameter tractable unless $\mathsf{FPT} = \mathsf{W}[1]$.

\medskip
\noindent
\textbf{Our contributions.}
For another generalization the parameterized complexity is still open:
In the {\sc Eulerian Strong Component Arc (Vertex) Deletion} problem, one is given a directed multigraph~$G$, and asks for a set~$S$ of at most~$k$ vertices such that every strong component of $G- S$ is Eulerian, that is, every vertex has the same in-degree and out-degree within its strong component.
The arc version of this problem was suggested by Cechl\'{a}rov\'{a} and Schlotter~\cite{CechlarovaSchlotter2011} in the context of housing markets.
Marx~\cite{Marx2012} explicitly posed determining the parameterized complexity of {\sc Eulerian Strong Component Vertex Deletion} as an open problem.
Notice that these problems generalize the DFAS/DFVS problems, where each strong component of $G - S$ has size one and thus is Eulerian.

\begin{restatable}{theorem}{eulerianstrongcomponentvdishard}
\label{thm:eulerianscvd_main}
  {\sc Eulerian Strong Component Vertex Deletion} is $\mathsf{NP}$-hard and \mbox{$\mathsf{W[1]}$-hard} parameterized by solution size $k$, even for $(k+1)$-strong digraphs.
\end{restatable}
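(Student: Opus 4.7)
\medskip
\noindent\textbf{Proof proposal.}
My plan is to prove both claims with one parameter-preserving reduction from \textsc{Multicolored Clique}, which is well known to be both $\mathsf{NP}$-hard and $\mathsf{W}[1]$-hard parameterized by the number of color classes.

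The key observation I would use first is that on a $(k+1)$-strong digraph, removing any $k$ vertices leaves the graph strongly connected, so $G - S$ has a single strong component. Hence, restricted to this class of inputs, \textsc{Eulerian Strong Component Vertex Deletion} asks for a size-$k$ set $S$ making every surviving vertex's in- and out-degree equal. Writing $b(v) = d^+_G(v) - d^-_G(v)$ for the imbalance in $G$, this becomes the combinatorial system
\[
  |\{s \in S : (v,s) \in A(G)\}| \;-\; |\{s \in S : (s,v) \in A(G)\}| \;=\; b(v) \qquad \text{for every } v \notin S.
\]
This reformulation is what makes gadget-based rigidity arguments available.

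Given an instance $(H, V_1, \dots, V_k)$ of \textsc{Multicolored Clique}, I would construct a digraph $G'$ in three layers: (i) for each color class $V_i$ a \emph{selection gadget} whose imbalances can be absorbed by a size-$k$ deletion set only if exactly one vertex $x_i \in V_i$ is chosen per class, with any $x_i \in V_i$ being a valid choice; (ii) for every non-edge $\{u,v\}$ of $H$ with $u \in V_i$, $v \in V_j$ an \emph{incompatibility gadget} whose induced imbalance is absorbed iff the representatives chosen from $V_i$ and $V_j$ are not simultaneously $u$ and $v$; (iii) a dense \emph{bidirected skeleton} of anti-parallel arcs on all vertices that makes $G'$ $(k+1)$-strong while contributing $0$ to every vertex's imbalance, since each bidirected pair adds $+1$ to both in- and out-degree. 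Completeness is then direct: the deletion of any multicolored clique $\{x_1, \dots, x_k\}$ trivially satisfies every incompatibility gadget (no non-edge lies inside a clique) and balances the selection gadgets by construction.

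The main obstacle I expect is soundness: I must rule out ``cheating'' deletion sets that pick two vertices from one class and none from another, or that spend deletions inside auxiliary or skeleton vertices. The standard cure is to scale the imbalances of each gadget by a factor $M$ large enough that no vertex of $G'$ can simultaneously contribute to more than one gadget's repair; a pigeonhole argument then forces $|S \cap V_i| = 1$ for every $i$, and the incompatibility gadgets force the chosen representatives to be pairwise adjacent in $H$. Tuning these imbalances so that they compose correctly with the skeleton while respecting the global identity $\sum_v b(v) = 0$, and verifying that the final digraph really is $(k+1)$-strong (controlling the multiplicity of the skeleton in terms of the total number of gadget vertices), is the technical heart of the argument. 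Once this rigidity is established, the reduction is polynomial and the parameter is preserved, yielding both $\mathsf{NP}$-hardness and $\mathsf{W}[1]$-hardness.
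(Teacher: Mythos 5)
Your reformulation is exactly the right starting point: on a $(k+1)$-strong digraph, deleting at most $k$ vertices leaves a single strong component, so the problem collapses to demanding that $G - S$ be balanced, which is the linear condition on $d^+ - d^-$ you wrote down. Your idea of adding a ``bidirected skeleton'' to achieve high connectivity without perturbing balance is also sound and is in fact what the paper does. However, you then propose to re-derive hardness of the balance condition from scratch via a gadget reduction from \textsc{Multicolored Clique}, with selection gadgets, incompatibility gadgets, and a scaling argument. This is a genuine route, but it misses a large shortcut and consequently undertakes a substantial amount of work that you have only sketched, not completed: none of the gadgets are specified, the scaling factor is not fixed, and the soundness argument you flag as ``the technical heart'' is entirely open.

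The shortcut is that, once you have reduced to ``delete $k$ vertices to make the digraph balanced,'' you are looking at a named problem --- \textsc{Directed Balanced Vertex Deletion} --- which Cygan et al.\ already proved $\mathsf{NP}$-hard and $\mathsf{W}[1]$-hard in $k$. The paper's entire reduction is: take the hard instance $(G,k)$ of that problem, add $k+1$ hub vertices $z_1,\dots,z_{k+1}$ each connected to every vertex of $G$ by arcs in both directions, and output $(G',k)$. The hubs force $(k+1)$-strongness (any separator must contain all of them), and since they attach to each $v$ by an in-arc and an out-arc in matched pairs, deleting either endpoint never disturbs balance --- precisely the invariant your skeleton exploits. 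The equivalence is then immediate from Lemma~\ref{thm:equivalenteuleriandefinitions} (strong $+$ balanced $\Leftrightarrow$ Eulerian). You would do well to scan for existing hardness results matching your intermediate reformulation before building a fresh \textsc{Multicolored Clique} gadgetry: here it turns a multi-page construction into a one-paragraph reduction. If you do persist with the direct route, note that you would essentially be re-proving Cygan et al.'s theorem, and any gaps in your gadget design (in particular the pigeonhole/scaling step you defer) would need to be filled with the same care that their paper already invests.
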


Alas, we are unable to determine the parameterized complexity of {\sc Eulerian Strong Component Arc Deletion}, which appears to be more challenging.
Hence, we consider two natural generalizations of DFAS which may help to gain better insight into the parameterized complexity of that problem.

\medskip
First, we consider the problem of deleting a set of $k$ arcs or vertices from a given digraph such that every strong component has size at most~$s$.
Thus, the {\sc DFAS/DFVS} problems corresponds to the special case when $s = 1$.
Formally, the problem {\sc Bounded Size Strong Component Arc (Vertex) Deletion} takes as input a multi-digraph $G$ and integers $k,s$, and seeks a set~$S$ of at most~$k$ arcs or vertices such that every strong component of $G - S$ has size at most $s$.

The \emph{undirected} case of {\sc Bounded Size Strong Component Vertex (Arc) Deletion} was studied recently.
There, one wishes to delete at most~$k$ vertices of an undirected $n$-vertex graph such that each connected component of the remaining graph has size at most $s$.
For $s$ being constant, Kumar and Lokshtanov~\cite{KumarLokshtanov2016} obtained a kernel of size $2sk$ that can be computed in $n^{\mathcal{O}(s)}$ time; note that the degree of the run time in the input size $n$ depends on $s$ and is thus not a fixed-parameter algorithm.
For general $s$, there is a $9sk$-sized kernel computable in time $\mathcal{O}(n^4m)$ by Xiao~\cite{Xiao2017}.
The directed case---which we consider here---generalizes the undirected case by replacing each edge by arcs in both directions.

Our main result here is to solve the directed case of the problem by a fixed-parameter algorithm:
\begin{theorem}
\label{thm:boundedsizedeletion_main}
  There is an algorithm that solves {\sc Bounded Size Strong Component Arc (Vertex) Deletion} in time $4^k(ks + k + s)!\cdot n^{\mathcal{O}(1)}$ for $n$-vertex multi-digraphs and integers~$k,s$.
\end{theorem}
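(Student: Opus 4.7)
The plan is to adapt Chen et al.'s iterative-compression algorithm for \textsc{DFVS} to the bounded-size setting, in the spirit of Xiao's algorithm for the undirected analogue. Iterate through the vertices of $G$ one by one, maintaining throughout a feasible solution of size at most~$k+1$ for the currently considered induced subgraph, and try at each step to compress it to size~$k$. A solution $X$ of size $k+1$ is turned into a smaller candidate by enumerating, over all $2^{k+1}$ subsets of~$X$, the intersection $Y = X \cap S^*$ with the sought-for solution~$S^*$, setting $W = X \setminus Y$, deleting~$Y$ from the graph, and solving the following \emph{disjoint} subproblem: find a minimum set $S \subseteq V(G-Y)\setminus W$ of size at most $k-|Y|$ such that every strong component of $(G-Y)-S$ has size at most~$s$. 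Aggregating this over all iterations and guesses contributes the $4^k$ factor to the stated running time.

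For the disjoint subproblem, the feasibility of the old solution~$X$ implies that already every strong component of $(G-Y)-W$ has size at most~$s$. Consequently in $(G-Y)-S$ the strong components disjoint from~$W$ (``non-$W$-components'') are automatically of size at most~$s$, and only the strong components that meet~$W$ (``$W$-components'') need the size bound to be enforced. There are at most $|W|\le k+1$ $W$-components, each of size at most~$s$, and the topological order of the strong components of $(G-Y)-S$ induces an ordered partition $(P_1,\ldots,P_p)$ of~$W$ with $|P_i|\le s$, together with extra-slot counts $\sigma_i \le s-|P_i|$ giving the number of non-$W$ vertices in the $W$-component containing~$P_i$. The plan is to enumerate every such combinatorial structure exhaustively: this amounts to counting ordered sequences with at most $|W|+|W|\cdot s \le (k+1)(s+1)$ items (labelled $W$-vertices plus anonymous slots for extras), yielding a bound of $(ks+k+s)!$.

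For each fixed structure we then need to decide in polynomial time whether a set $S\subseteq V(G-Y)\setminus W$ of size at most $k-|Y|$ exists so that in $(G-Y)-S$: (i)~there is no path from~$P_j$ to~$P_i$ for $j>i$, so the~$P_i$'s lie in distinct strong components in the prescribed topological order, and (ii)~the strong component containing~$P_i$ has exactly $|P_i|+\sigma_i$ vertices. Condition~(i) alone is a classical \emph{skew multicut} requirement that Chen et al.'s skew separator algorithm solves in polynomial time via iterated maximum-flow computations. The main obstacle, and the novel part of the proof, is to also enforce condition~(ii): a component-size bound, which falls outside the standard skew separator framework. The plan is to augment the skew separator flow network by attaching to each group~$P_i$ auxiliary vertices and arcs that act as capacity bookkeeping for the $\sigma_i$ anonymous extras, exploiting the fact that any admissible extra vertex for~$P_i$ must lie in a strong component of $(G-Y)-W$ that is bidirectionally reachable from~$P_i$ through the remaining $W$-vertices, so that the candidate set is already locally bounded. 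Once the augmented skew separator is shown to be polynomial-time solvable, combining the ingredients yields the overall running time of $4^k(ks+k+s)!\cdot n^{\mathcal{O}(1)}$, and the arc-deletion variant follows from the standard parameter-preserving subdivision reduction to the vertex-deletion variant.
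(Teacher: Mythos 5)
Your high-level plan matches the paper's in outline (iterative compression, disjoint compression, a skew-separator-flavoured reduction), but it diverges in the technically crucial middle step and contains two concrete errors.

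First, you misattribute the $4^k$ factor. You claim it comes from the $2^{k+1}$ guesses of $Y=X\cap S^*$ in the disjoint-compression step, and then assert that Chen et al.'s skew separator algorithm ``solves [the skew multicut requirement] in polynomial time via iterated maximum-flow computations.'' That is false: \textsc{Skew Separator} is itself only known to be fixed-parameter tractable, with running time $4^k k \cdot \mathcal{O}(n^3)$ (Chen et al., Thm.~3.5), obtained by a branching argument on pushing towards an important separator, not by pure flow computations. This is exactly where the $4^k$ in the theorem comes from. The $2^{k+1}\cdot n$ cost of iterative compression is absorbed into the $n^{\mathcal{O}(1)}$ and a loose constant, not into the $4^k$.

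Second, your treatment of the component-size constraint is an unfinished plan rather than a proof, and the paper uses a genuinely different device there. You propose to enumerate abstract ordered partitions of $W$ together with anonymous ``extra-slot counts'' $\sigma_i$, and then enforce ``exactly $|P_i|+\sigma_i$ vertices in the $i$-th component'' by augmenting the skew separator flow network with capacity-bookkeeping gadgets. You explicitly call this the main obstacle and the novel part, and it is precisely what is not shown to work: a separator instance can certify that a component contains \emph{no} path of a forbidden type, but counting how many extra vertices end up strongly connected to $P_i$ is not a cut condition and does not reduce to bounding a flow. The paper sidesteps this entirely via Lemma~\ref{thm:constructcandidatesets}: it \emph{concretely} enumerates, by a $(k,s)$-bounded branching on short closing paths, the actual vertex sets $C_h$ that could be the strong components of $G-S$ containing the $T$-vertices, producing at most $(ks+s-1)!$ candidate vectors. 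Once the $C_h$ are concrete sets, each can be contracted into a pair $t_h^-, t_h^+$ (with the constraint that $t_h^+$ must not reach $t_h^-$), so the residual problem is a \emph{pure} skew separator instance over permutations of representatives, with no size bookkeeping left to do. Your route replaces this concrete enumeration with an abstract one, and that is exactly why you are then stuck needing a counting constraint inside the cut problem.

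Finally, your last sentence---that the arc-deletion case ``follows from the standard parameter-preserving subdivision reduction to the vertex-deletion variant''---is not correct for this problem. Subdividing each arc adds a vertex that contributes to the size of strong components, so the size bound $s$ is not preserved; the paper's reduction (Lemma~\ref{thm:boundedsize_arctovertex}) has to blow each original vertex up into a complete digraph of size about $(k+1)s^2$ precisely so that subdivision vertices become negligible, and it maps $s$ to $(k+1)s^3$, which is not parameter-preserving in $s$. You would need to present (and analyse) such a gadget rather than appeal to a standard subdivision.
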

In particular, our algorithm exhibits the same asymptotic dependence on~$k$ as does the algorithm by Chen et al.~\cite{ChenEtAl2008} for the DFVS/DFAS problem, which corresponds to the special case $s = 1$.

Another motivation for this problem comes from the $k$-linkage problem, which asks for $k$ pairs of terminal vertices in a digraph if they can be connected by $k$ mutually arc-disjoint paths.
The $k$-linkage problem is $\mathsf{NP}$-complete already for $k = 2$~\cite{FortuneEtAl1980}.
Recently, Bang-Jensen and Larsen~\cite{BangJensenLarsen2016} solved the $k$-linkage problem in digraphs where strong components have size at most~$s$.
Thus, finding induced subgraphs with strong components of size at most $s$ can be of interest in computing $k$-linkages.

\medskip

Our second problem 
is that of deleting a set of~$k$ arcs or vertices from a given digraph such that each remaining strong component is \emph{$r_C$-out-regular}, meaning that every vertex has out-degree exactly $r_C$ in its strong component $C$, for $r_C \le 1$.
So in particular, every strong component is Eulerian, as in the {\sc Eulerian Strong Component Arc Deletion} problem.
Observe that in the {\sc DFAS/DFVS} problem we delete~$k$ arcs or vertices from a given directed graph such that each remaining strong component is 0-out-regular (trivial).
Formally, we consider the {\sc 1-Out-Regular Arc (Vertex) Deletion} problem in which for a given multi-digraph~$G$ and integer~$k$, we seek a set $S$ of at most $k$ arcs (vertices) such that every component~$C$ of $G - S$ is $r_C$-out-regular with $r_C \in \lbrace 0,1\rbrace$.
Note that this problem is equivalent to deleting a set~$S$ of at most $k$ arcs (vertices) such that every non-trivial (consisting of more than one vertex) strong component of $G - S$ is an induced directed cycle.
In contrast to {\sc Eulerian Strong Component Vertex Deletion}, the {\sc 1-Out-Regular Arc (Vertex) Deletion} problem \emph{is} monotone, in that every superset of a solution is again a solution: if we delete an additional arc or vertex that breaks a strong component that is an induced cycle into several strong components, then each of these newly created strong components is trivial.

Our result for this problem reads as follows:
\begin{theorem}
\label{thm:2regulareuleriandeletion_main}
  There is an algorithm solving {\sc 1-Out-Regular Arc (Vertex) Deletion} in time $2^{\mathcal{O}(k^3)}\cdot \mathcal{O}(n^4)$ for $n$-vertex digraphs $G$ and parameter~$k\in\mathbb N$.
\end{theorem}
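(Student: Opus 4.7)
The plan is to employ iterative compression, in the spirit of the Chen et al.\ algorithm for DFVS. We process the vertices of the input digraph $G$ one at a time, maintaining a valid solution; after each addition the solution grows to size $k+1$, and we invoke a compression subroutine to shrink it back to size at most $k$ or certify that no such solution exists. This outer framework contributes a factor of $n\cdot 2^{k+1}$ to the final run time.

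The compression subroutine reduces in the standard way to a \emph{disjoint} variant: given $G$ together with $W\subseteq V(G)$ of size at most $k+1$ such that $G-W$ is valid (which we may assume by monotonicity of the property, as noted in the paper), decide whether some $S\subseteq V(G)\setminus W$ with $|S|\le k$ makes $G-S$ valid. The key structural observation is that $G-W$ is very rigid: its non-trivial strong components are induced directed cycles $D_1,\ldots,D_t$ and the remaining vertices form a DAG, giving us concrete handles for the algorithm.

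To solve the disjoint variant in time $2^{O(k^3)}\cdot n^{O(1)}$, we first enumerate the \emph{cyclic pattern} of $W$ in the target graph $G-S$: for each $w\in W$ we decide whether it sits in a trivial strong component or on an induced cycle, and then partition the cycle-vertices into groups lying together on a common cycle with a specified cyclic order. This yields at most $(k+1)^{O(k)} = 2^{O(k\log k)}$ branches. For each pattern we must actually produce $S$ together with the corresponding vertex-disjoint induced paths between consecutive $W$-vertices that realise the guessed cycles, while ensuring that no other non-trivial strong component appears in $G-S$ beyond the $D_j$'s that stay intact.

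The main obstacle is the realisation step: finding vertex-disjoint induced paths in digraphs is hard in general, so we must exploit the special structure of $G-W$. We plan a second layer of enumeration that, for each of the at most $k$ components $D_j$ ``touched'' by $S$ or by a guessed cycle, records the interaction type: which vertices of $D_j$ are deleted, and which of the resulting segments of $D_j$ are woven into which guessed cycle. Because only $O(k)$ ports of $D_j$ can interact with $W\cup S$, the interaction for each touched $D_j$ is describable by $O(k^2)$ bits, totalling $2^{O(k^3)}$ enumerations. Once the full pattern and interaction data are fixed, realisability reduces to a polynomially-decidable feasibility problem on an auxiliary graph (a skew-multicut / flow computation in the DAG-part of $G-W$), solvable in $\mathcal O(n^3)$; combining with the iterative-compression factor of $n$ yields the claimed $2^{O(k^3)}\cdot\mathcal O(n^4)$ bound.
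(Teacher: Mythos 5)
Your iterative-compression scaffolding matches the paper's, but from there the two proofs diverge completely, and yours has serious gaps. The paper attacks the disjoint variant via the Chitnis et al.\ \emph{shadow covering} machinery: it constructs $2^{\mathcal O(k^2)}\log^2 n$ candidate sets $Z$ that cover the shadow of some solution $S$, applies a torso operation (with a good/bad-arc bookkeeping to preserve the forbidden subgraph family under contraction), and then exploits the resulting shadowlessness to guarantee that the topologically last strong component of $G-S$ contains a vertex of $T$; that component is then reconstructed by a bounded out-degree branching, and the process recurses. Your proposal instead enumerates the cyclic pattern of $W$ together with ``interaction types'' and offloads the rest to a flow computation. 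This is a legitimately different route, and I don't think it works as stated.

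The central gap is the realisation step. You acknowledge that finding vertex-disjoint induced paths is hard but claim the special structure of $G-W$ reduces everything to ``a skew-multicut / flow computation in the DAG-part of $G-W$.'' This is asserted, not argued, and I do not see how a flow or skew-cut formulation can simultaneously (i) produce the required vertex-disjoint \emph{chordless} (the 1-out-regular condition forces the resulting cycles to be induced, hence chordless in $G$ itself, not just in $G-S$) paths, (ii) ensure that each guessed cycle closes off into a strong component with no arcs leaving and returning, and (iii) use the \emph{same} budget $S$ to also kill every other would-be non-trivial strong component (the DFVS-like constraint on the residual graph). Constraints (i)--(iii) are coupled: a vertex deleted for one cycle may be needed on another, and the acyclicity constraint competes for the same $k$ deletions. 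Skew separators as in the paper's Section~5 handle a topologically ordered cut problem with fixed sources and sinks; they do not produce induced paths.

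A second gap is the claimed $2^{\mathcal O(k^3)}$ enumeration of interactions. You assert that ``only $\mathcal O(k)$ ports of $D_j$ can interact with $W\cup S$'' and hence $\mathcal O(k^2)$ bits per touched $D_j$. But each $D_j$ can be arbitrarily long and can send/receive arbitrarily many arcs to the $\mathcal O(k)$ vertices of $W$; nothing in the 1-out-regular condition caps the number of such arcs or the number of positions on $D_j$ at which the guessed cycles could enter or leave. Specifying which $\leq k$ vertices of $D_j$ are deleted and where the surviving segments attach requires positional information whose size is not bounded by a function of $k$ alone. You would need a compression argument (e.g.\ showing that only the combinatorial type of the attachment matters and that a canonical representative can be found greedily) before the enumeration bound holds; none is given.

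Finally, note that the paper's shadow-covering step is doing real work: without it there is no guarantee that the last strong component of $G-S$ meets $T$, which is what drives the branching recovery. Your proposal silently relies on every relevant cycle passing through $W$, but new non-trivial strong components of $G-S$ that avoid $W$ entirely must also be controlled; while one can argue (as you implicitly do) that any such component must coincide with a surviving $D_j$, the acyclicity of everything else is again constraint (iii) above, which your plan leaves unresolved.
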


%

Notice that for {\sc Bounded Size Strong Component Arc (Vertex) Deletion} and {\sc 1-Out-Regular Arc (Vertex) Deletion}, there are infinitely many instances for which solutions are arbitrarily smaller than those for DFAS (DFVS), and for any instance they are never larger.
Therefore, our algorithms strictly generalize the one by Chen et al.~\cite{ChenEtAl2008} for DFAS (DFVS).
As a possible next step towards resolving the parameterized complexity of {\sc Eulerian Strong Component Arc Deletion}, one may generalize our algorithm for {\sc 1-Out-Regular Arc Deletion} to {\sc $r$-Out-Regular Arc Deletion} for arbitrary~$r$.

\medskip
We give algorithms for vertex deletion variants only, and then reduce the arc deletion variants to them.

\clearpage
\pagebreak
\section{Notions and Notations}
\label{sec:notionsandnotations}
We consider finite directed graphs (or digraphs) $G$ with vertex set $V(G)$ and arc set $A(G)$.
We allow multiple arcs and arcs in both directions between the same pairs of vertices.
For each vertex $v\in V(G)$, its \emph{out-degree} in~$G$ is the number $d^+_G(v)$ of arcs of the form $(v,w)$ for some $w\in V(G)$, and its \emph{in-degree} in $G$ is the number $d^-_G(v)$ of arcs of the form $(w,v)$ for some $w\in V(G)$.
A vertex $v$ is \emph{balanced} if $d^+_G(v) = d^-_G(v)$.
A digraph $G$ is \emph{balanced} if every vertex $v \in V(G)$ is balanced.

For each subset $V'\subseteq V(G)$, the subgraph induced by $V'$ is the graph~$G[V']$ with vertex set~$V'$ and arc set $\{(u,v)\in A(G)~|~u,v\in V'\}$.
For any set $X$ of arcs or vertices of $G$, let $G - X$ denote the subgraph of $G$ obtained by deleting the elements of~$X$ from $G$.
For subgraphs~$G'$ of~$G$ and vertex sets $X\subseteq V(G)$ let $R^+_{G'}(X)$ denote the set of vertices that are \emph{reachable} from~$X$ in $G'$, i.e. vertices to which there is a path from some vertex in $X$.
For an $s$-$t$-walk $P$ and a $t$-$q$-walk $R$ we denote by $P \circ R$ the \emph {concatenation} of these paths, i.e. the $s$-$q$-walk resulting from first traversing $P$ and then $R$.

Let $G$ be a digraph.
Then $G$ is \emph{$1$-out-regular} if every vertex has out-degree exactly $1$.
Further,~$G$ is called \emph{strong} if either $G$ consists of a single vertex (then~$G$ is called \emph{trivial}), or for any distinct $u,v\in V(G)$ there is a directed path from $u$ to $v$.
A \emph{strong component} of $G$ is an inclusion-maximal strong induced subgraph of $G$.
Also, $G$ is \emph{$t$-strong} for some $t\in\mathbb N$ if for any $X \subseteq V(G)$ with $|X| < t$, $G - X$ is strong.
We say that $G$ is \emph{weakly connected} if its underlying undirected graph~$\langle G\rangle$ is connected.
Finally, $G$ is \emph{Eulerian} if there is a closed walk in $G$ using each arc exactly once.

\begin{definition}
  For disjoint non-empty vertex sets $X,Y$ of a digraph $G$, an arc or vertex set~$S$ is an \emph{$X \to Y$-separator} if~$S$ is disjoint from $X\cup Y$ and there is no path from~$X$ to $Y$ in $G - S$.
  
  An $X \to Y$-separator $S$ is \emph{minimal} if no proper subset of $S$ is an $X \to Y$-separator.
  An $X \to Y$-separator~$S$ is \emph{important} if there is no $X \to Y$-separator $S'$ with $|S'|\leq |S|$ and $R^+_{G - S}(X) \subset R^+_{G - S'}(X)$.
\end{definition}

\begin{proposition}[\cite{ChitnisEtAl2013}]
\label{thm:fptenumerationofimportantseperators}
  Let $G$ be a digraph and let $X,Y\subseteq V(G)$ be disjoint non-empty vertex sets.
  For every $p\geq 0$ there are at most $4^p$ important $X \to Y$-separators of size at most $p$, all of which can be enumerated in time $4^p\cdot n^{\mathcal{O}(1)}$.
\end{proposition}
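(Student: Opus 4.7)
The statement is a classical bound on important separators, and I would prove it by a branching algorithm whose search tree has at most $4^p$ leaves, each producing one important $X \to Y$-separator. The right potential is $\mu := 2p - \lambda_G(X,Y)$, where $\lambda_G(X,Y)$ denotes the size of a minimum $X \to Y$-separator (obtainable in polynomial time via max-flow). Since every separator has size at least $\lambda_G(X,Y)$, the instance is trivial when $\lambda_G(X,Y) > p$, and when $\lambda_G(X,Y) = 0$ the only important separator is $\emptyset$; otherwise $\mu \ge 0$.

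At each recursive step I would compute the unique minimum $X \to Y$-separator $S_0$ with inclusion-maximal source side $R_0 = R^+_{G - S_0}(X)$, which exists by submodularity of the reachable-set function on min-cuts and can be extracted in polynomial time from any max-flow. Picking any $v \in S_0$, I branch on whether $v$ lies in the sought important separator $S$. \emph{If $v \in S$}, I recurse on $(G - v, X, Y)$ with budget $p-1$ and prepend $v$ to every returned separator; removing $v$ lowers the min-cut by at most one, so $\mu' \le 2(p-1) - (\lambda_G(X,Y) - 1) = \mu - 1$. \emph{If $v \notin S$}, I recurse on $(G, X \cup \{v\}, Y)$ with budget $p$; the decisive claim here is that $\lambda_G(X \cup \{v\}, Y) \ge \lambda_G(X,Y) + 1$, for otherwise a new min-cut of size $\lambda_G(X,Y)$ would, by an uncrossing argument with $S_0$, yield a min $X \to Y$-cut whose source side strictly contains $R_0$, contradicting maximality of $R_0$. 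Hence $\mu$ again drops by at least one.

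Correctness demands two further checks: that every important separator is produced by exactly one branch, and that an important $X \to Y$-separator $S$ with $v \notin S$ remains an important $(X \cup \{v\}) \to Y$-separator in the recursive instance. The latter follows from a structural lemma which I expect to be the main technical obstacle: one shows that any important separator $S$ of $G$ satisfies $R^+_{G - S}(X) \supseteq R_0 \cup (S_0 \setminus S)$, so in particular $v \in R^+_{G - S}(X)$ and the separation property transfers cleanly. Combining the two branches, the recurrence $T(\mu) \le 2\, T(\mu - 1)$ with $T(0) = 1$ yields at most $2^{\mu} \le 2^{2p} = 4^p$ leaves. Each node performs only polynomial work (a max-flow computation and the extraction of the rightmost min-cut), for a total running time of $4^p \cdot n^{\mathcal{O}(1)}$, matching the stated bound.
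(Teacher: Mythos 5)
Your proof is correct and is exactly the standard branching argument of Marx/Chitnis et al. that the paper cites: potential $\mu = 2p - \lambda_G(X,Y)$, unique closest-to-$Y$ minimum separator $S_0$ via submodularity and max-flow, two-way branch on $v \in S_0$ each decreasing $\mu$ by at least one, with the key pushing lemma $R^+_{G-S}(X) \supseteq R_0$ for every important separator $S$ correctly identified as the crux (even though deferred). The paper states this proposition as an external result and does not reprove it, so there is nothing further to compare.
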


\section{Tools for Generalized DFVS/DFAS Problems}
\label{sec:generalsteps}

\medskip
\noindent
\textbf{Iterative Compression.}
We use the standard technique of iterative compression.
For this, we label the vertices of the input digraph $G$ arbitrarily by $v_1,\hdots,v_n$, and set $G_i = G[\{v_1,\hdots,v_i\}]$.
%
We start with~$G_1$ and the solution $S_1 = \{v_1\}$.
As long as $|S_i| < k$, we can set $S_{i+1} = S_i \cup \{v_{i+1}\}$ and continue.
As soon as $|S_i| = k$, the set $T_{i+1} = S_i \cup \{v_{i+1}\}$ is a solution for $G_{i+1}$ of size $k+1$.
The \emph{compression variant} of our problem then takes as input a digraph~$G$ and a solution $T$ of size $k+1$, and seeks a solution $S$ of size at most $k$ for~$G$ or decides that none exists.

We call an algorithm for the compression variant on $(G_{i+1}, T_{i+1})$ to obtain a solution~$S_{i+1}$ or find out that $G_{i+1}$ does not have a solution of size $k$, but then neither has~$G$.
By at most~$n$ calls to this algorithm we can deduce a solution for the original instance $(G_n = G,k)$.


\medskip
\noindent
\textbf{Disjoint solution.}
Given an input $(G,T)$ to the compression variant, the next step is to ask for a solution $S$ for $G$ of size at most $k$ that is disjoint from the given solution $T$ of size $k+1$.
This assumption can be made by guessing the intersection $T' = S\cap T$, and deleting those vertices from $G$.
Since $T$ has $k+1$ elements, this step creates $2^{k+1}$ candidates $T'$.
The \emph{disjoint compression variant} of our problem then takes as input a graph $G - T'$, a solution $T\setminus T'$ of size $k+1-|T'|$, and seeks a solution $S'$ of size at most $k - |T'|$ disjoint from $T\setminus T'$.



\medskip
\noindent
\textbf{Covering the shadow of a solution.}
The ``shadow'' of a solution $S$ is the set of those vertices that are disconnected from $T$ (in either direction) after the removal of $S$.
A common idea of several fixed-parameter algorithms on digraphs is to first ensure that there is a solution whose shadow is empty, as finding such a shadowless solution can be a significantly easier task.
A generic framework by Chitnis et al.~\cite{ChitnisEtAl2015} shows that for special types of problems as defined below, one can invoke the random sampling of important separators technique and obtain a set $Z$ which is disjoint from a minimum solution and covers its shadow, i.e. the shadow is contained in $Z$.
What one does with this set, however, is problem-specific.
Typically, given such a set, one can use (some problem-specific variant of) the ``torso operation'' to find an equivalent instance that has a shadowless solution.
Therefore, one can focus on the simpler task of finding a shadowless solution or more precisely, finding any solution under the guarantee that a shadowless solution exists.


\begin{definition}[shadow]
  Let $G$ be a digraph and let $T,S\subseteq V(G)$.
  A vertex $v\in V(G)$ is \emph{in the forward shadow $f_{G,T}(S)$ of $S$ (with respect to $T$)} if $S$ is a $T \to \{v\}$-separator in $G$, and~$v$ is \emph{in the reverse shadow $r_{G,T}(S)$ of $S$ (with respect to $T$)} if $S$ is a $\{v\} \to T$-separator in $G$.
    
  A vertex is \emph{in the shadow of $S$} if it is in the forward or reverse shadow of $S$.
\end{definition}
Note that $S$ itself is not in the shadow of $S$ by definition of separators.

\begin{definition}[$T$-connected and $\mathcal F$-transversal]
  Let $G$ be a digraph, let $T\subseteq V(G)$ and let $\mathcal F$ be a set of subgraphs of $G$.
  We say that $\mathcal F$ is \emph{$T$-connected} if for every $F\in\mathcal F$, each vertex of~$F$ can reach some and is reachable by some (maybe different) vertex of $T$ by a walk completely contained in $F$.
  For a set~$\mathcal F$ of subgraphs of $G$, an \emph{$\mathcal F$-transversal} is a set of vertices that intersects the vertex set of every subgraph in $\mathcal F$.
\end{definition}

\noindent
Chitnis et al.~\cite{ChitnisEtAl2015} show how to deterministically cover the shadow of $\mathcal F$-transversals:
\begin{proposition}[deterministic covering of the shadow, \cite{ChitnisEtAl2015}]
\label{thm:deterministiccoveringoftheshadow}
  Let $T\subseteq V(G)$.
  In time $2^{\mathcal O(k^2)}\cdot n^{\mathcal{O}(1)}$ one can construct $t \leq 2^{\mathcal{O}(k^2)}\log^2n$ sets $Z_1,\hdots,Z_t$ such that for any set of subgraphs~$\mathcal F$ which is $T$-connected, if there exists an $\mathcal F$-transversal of size at most~$k$ then there is an $\mathcal F$-transversal $S$ of size at most~$k$ that is disjoint from $Z_i$ and such that $Z_i$ covers the shadow of $S$, for some~$i \leq t$.
\end{proposition}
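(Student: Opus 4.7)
The plan is to follow the random sampling of important separators framework due to Chitnis et al.~\cite{ChitnisEtAl2015}, first proving a randomized covering statement and then derandomizing it via an explicit splitter construction. The sets $Z_1,\dots,Z_t$ are built from the input $(G,T,k)$ alone, oblivious to $\mathcal{F}$, and we aim to show that one of them will simultaneously avoid a well-chosen optimal transversal $S$ and contain the full shadow of $S$.

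For the randomized step, I would show that there is a distribution on subsets $Z \subseteq V(G)\setminus T$, depending only on $(G,T,k)$, such that for every $T$-connected $\mathcal{F}$ admitting an $\mathcal{F}$-transversal of size at most $k$, with probability $2^{-\mathcal{O}(k^2)}$ the sample $Z$ is disjoint from some size-$k$ transversal $S$ and contains $f_{G,T}(S)\cup r_{G,T}(S)$. The key structural lemma is that the forward shadow of any size-$k$ solution $S$ is controlled by important separators: for each $v \in f_{G,T}(S)$, a minimum $T \to \{v\}$-separator of size at most $|S|$ in $G$ can be ``pushed'' towards $T$ until it becomes important, yielding an important separator of size at most $k$ whose far side contains $v$; moreover, the important separators obtained this way, as $v$ ranges over $f_{G,T}(S)$, form a family of size at most $k$, since their union is dominated by $S$ in the standard closer-to-$T$ sense. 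By Proposition~\ref{thm:fptenumerationofimportantseperators}, there are only $4^k$ important $T \to \cdot$-separators of size at most $k$, and symmetrically for $\cdot \to T$-separators, so the data $(S, f_{G,T}(S), r_{G,T}(S))$ is ``encoded'' by choosing $\mathcal{O}(k)$ important separators out of an explicit pool of size $\mathcal{O}(4^k)$; this encoding carries $\mathcal{O}(k^2)$ bits of information, yielding a success probability of $2^{-\mathcal{O}(k^2)}$ for a suitable random process that picks both a subset of the important-separator pool and a per-vertex threshold.

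Derandomization then proceeds via the splitter family of Naor, Schulman and Srinivasan: an $(N,\mathcal{O}(k^2))$-universal family of size $2^{\mathcal{O}(k^2)}\log N$ over the pool of $N \le 2^{\mathcal{O}(k)}$ candidate important separators deterministically realises every relevant random choice. Repeating the construction across $\mathcal{O}(\log n)$ scales, needed to detect in which ``layer'' with respect to $T$ the shadow vertices lie, produces the claimed $t \le 2^{\mathcal{O}(k^2)}\log^2 n$ sets in total. All enumeration steps fit within $2^{\mathcal{O}(k^2)}n^{\mathcal{O}(1)}$ time because the underlying important separators are enumerable in $4^k n^{\mathcal{O}(1)}$ time by Proposition~\ref{thm:fptenumerationofimportantseperators}.

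The main obstacle is the structural lemma that encodes the shadow via $\mathcal{O}(k)$ important separators, in particular the ``push-to-important'' argument controlling how a minimum $T \to \{v\}$-separator can be replaced by an important one without losing $v$ from the far side. The $T$-connectedness hypothesis on $\mathcal{F}$ plays no direct role in the covering step itself, but it is what makes the conclusion useful downstream: after a subsequent torso-like operation collapses the covered shadow, every $F \in \mathcal{F}$ still meets the non-shadow region, so the $\mathcal{F}$-transversal property survives. Once the structural lemma is established, the rest of the proof is a relatively standard combination of random separation with a splitter-based derandomization.
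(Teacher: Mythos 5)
This proposition is not proved in the paper at all: it is imported as a black-box citation to Chitnis et al.~\cite{ChitnisEtAl2015}, so there is no in-paper proof to compare your attempt against. With that caveat, your sketch does hit the main pillars of the random-sampling-of-important-separators framework (randomized covering via important separators, derandomization via universal sets/splitters, a $2^{-\mathcal{O}(k^2)}$ success probability arising from encoding $\mathcal{O}(k)$ separators out of a pool of size $4^{\mathcal{O}(k)}$), so the overall shape is right.

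Two details in the sketch deserve scrutiny, though. First, the direction of the ``push'' is suspect: with the paper's definition, an important $T\to\{v\}$-separator is one that \emph{maximizes} $R^+_{G-S}(T)$, i.e., is pushed \emph{away} from $T$ toward $v$; your description pushes ``towards $T$'' so that the far side (containing $v$) grows, which is the opposite of what the stated notion of importance rewards, and is also not obviously what preserves $v$ on the far side. In the actual Chitnis et al. argument the forward and reverse shadows are handled via separators in the two opposite orientations, and getting this orientation right is exactly where the structural lemma lives. Second, your remark that ``$T$-connectedness plays no direct role in the covering step'' is not accurate: $T$-connectedness is what guarantees the existence of a shadow-minimal transversal whose shadow is structured enough to be witnessed by a bounded family of important separators; without it the structural lemma you rely on would not go through. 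These are fixable, but they are not cosmetic: they are precisely the places where a from-scratch proof of this proposition would need to be careful.
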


\section{Hardness of Vertex Deletion}
\label{sec:eulerianscvd}
In this section we prove Theorem~\ref{thm:eulerianscvd_main}, by showing $\mathsf{NP}$-hardness and $\mathsf{W}[1]$-hardness of the {\sc Eulerian Strong Components Vertex Deletion} problem.
Before the hardness proof we recall an equivalent characterization of Eulerian digraphs:
\begin{lemma}[folklore]
\label{thm:equivalenteuleriandefinitions}
  Let $G$ be a weakly connected digraph.
  Then $G$ is Eulerian if and only if~$G$ is balanced.
\end{lemma}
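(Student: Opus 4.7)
The plan is to establish both directions of the equivalence separately, as is standard for this classical (folklore) characterization.

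For the forward direction, I assume $G$ admits a closed walk $W$ using each arc exactly once. For any vertex $v$, I would partition the arcs incident to $v$ according to how $W$ passes through $v$: each internal visit to $v$ pairs one incoming arc with one outgoing arc, and because $W$ is closed, its first and last occurrences of $v$ (if $v$ is the chosen start/end) likewise pair up. Since every arc incident to $v$ appears in exactly one such pair, this immediately gives $d^-_G(v) = d^+_G(v)$, so $v$ is balanced.

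For the reverse direction, I would use induction on $|A(G)|$. The base case $|A(G)| = 0$ forces $G$ to be a single vertex by weak connectivity, which is trivially Eulerian. For the inductive step, pick any $v_0 \in V(G)$ with $d^+_G(v_0) \geq 1$ and greedily construct a trail from $v_0$ by always leaving on an unused outgoing arc. Whenever the trail enters an intermediate vertex $w \neq v_0$, balance at $w$ guarantees that the count of unused out-arcs at $w$ remains positive, so the trail can continue. Hence the construction can only terminate back at $v_0$, yielding a closed trail $C$.

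Let $G' = G - A(C)$; then $G'$ is still balanced since $C$ uses the same number of in-arcs and out-arcs at every vertex. Let $H_1,\ldots,H_\ell$ be the weakly connected components of $G'$ that contain at least one arc. I would show that $V(H_i) \cap V(C) \neq \emptyset$ for each $i$: otherwise weak connectivity of $G$ would demand an arc of $G$ (necessarily in $A(C)$, because $H_i$ is its own weak component in $G'$) joining $V(H_i)$ to its complement, contradicting that $A(C)$ only touches vertices of $V(C)$. Each $H_i$ is weakly connected, balanced, and has strictly fewer arcs than $G$, so the induction hypothesis supplies an Eulerian circuit $E_i$ of $H_i$, which I can take to start and end at a vertex in $V(H_i) \cap V(C)$. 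Splicing each $E_i$ into $C$ at such a shared vertex yields a closed walk in $G$ using every arc exactly once, i.e.\ an Eulerian circuit of $G$.

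The main obstacle is the intersection argument $V(H_i) \cap V(C) \neq \emptyset$: this is where weak connectivity of $G$ (rather than of $G'$) is crucial, and it is what licenses the splicing step. Everything else reduces to bookkeeping about balance and unused arcs.
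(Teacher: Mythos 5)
Your proof is correct and complete. Note that the paper itself provides no proof of this lemma; it is stated as folklore and cited without argument, so there is no ``paper approach'' to compare against. What you have written is the standard inductive proof of Euler's theorem in the directed setting: the forward direction pairs each arrival at a vertex with a departure along the closed walk, and the reverse direction extracts a maximal closed trail from an arbitrary starting vertex (balance guarantees one can always leave any intermediate vertex one just entered), deletes its arcs, recurses on the arc-bearing weak components of the remainder, and splices the resulting sub-circuits back into the trail. The one point that typically gets glossed over---and which you handle explicitly and correctly---is that each nontrivial weak component $H_i$ of $G - A(C)$ must meet $V(C)$, which follows because otherwise $H_i$ would be a weak component of $G$ itself, contradicting weak connectivity; this is exactly what makes the splicing step possible. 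You might optionally note in passing that the $H_i$ inherit balance because they are weak components of the balanced graph $G - A(C)$ (so all incident arcs stay within the component), but this is implicit in what you wrote.
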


We can now state the hardness reduction, which relies on the hardness of the following problem introduced by Cygan et al.~\cite{CyganEtAl2014}.
In {\sc Directed Balanced Vertex Deletion}, one is given a directed multigraph $G$ and an integer $k\in\mathbb N$, and seeks a set $S$ of at most $k$ vertices such that $G - S$ is balanced.


\begin{proposition}[\cite{CyganEtAl2014}]
\label{thm:hardnessbalancednodedeletion}
  {\sc Directed Balanced Vertex Deletion} is $\mathsf{NP}$-hard and $\mathsf{W}[1]$-hard with parameter $k$.
\end{proposition}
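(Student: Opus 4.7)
My approach would be to establish both hardness claims simultaneously via a single parameterized reduction from \textsc{Multicolored Clique}, which is W[1]-hard parameterized by the number of colors and also NP-hard in general. A polynomial-size reduction in which the new parameter is bounded by a computable function of the old one then delivers both conclusions in one stroke.

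Given a \textsc{Multicolored Clique} instance with graph $H$, color classes $V_1,\dots,V_\ell$, and target $\ell$, I would construct a directed multigraph $G$ as follows. For each color $i$, build a \emph{selection gadget} containing a copy of every vertex of $V_i$ together with auxiliary vertices; its arcs are arranged so that the only way to restore balance inside the gadget is to delete exactly one representative of $V_i$, together with a fixed overhead of auxiliary vertices. For each edge $\{u,v\}\in E(H)$ with $u\in V_i$, $v\in V_j$, add an \emph{edge gadget} between the selection gadgets of colors $i$ and $j$ whose imbalance contribution is cancelled precisely when the representatives chosen for $V_i$ and $V_j$ are $u$ and $v$. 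A convenient way to wire this together is to route parallel arcs through the representatives, exploiting that deleting a vertex changes the in-degree of each of its out-neighbors, and the out-degree of each of its in-neighbors, by exactly one. The target deletion budget is set to $k=g(\ell)$ for a small linear function $g$ accounting for the auxiliary overhead per gadget.

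For the forward direction of correctness, start from an $\ell$-clique $\{u_1,\dots,u_\ell\}$ in $H$, delete each $u_i$ together with the prescribed auxiliary vertices, and verify that at every surviving vertex the contributions of selection and edge gadgets to the imbalance cancel. For the reverse direction, an imbalance-budget argument inside each selection gadget forces any balanced deletion set of size at most $k$ to pick exactly one representative per color class, after which the edge gadgets ensure that the chosen representatives are pairwise adjacent in $H$, producing the required clique. Since $|V(G)|$ and $|A(G)|$ are polynomial in $|H|$ and $k=g(\ell)$, this yields NP-hardness as a polynomial-time reduction and W[1]-hardness as a parameterized reduction simultaneously.

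The main obstacle will be making the selection and edge gadgets tight enough that no cheating strategy exists: the selection gadget must rule out substituting auxiliary-vertex deletions for a representative, and the edge gadget must not introduce residual imbalance at vertices outside its intended scope or allow two simultaneously selected non-adjacent representatives to "cancel" each other's imbalance through an unintended path. The standard remedy is to inflate arc multiplicities so that any deviation from the intended pattern produces an imbalance exceeding what the remaining budget can repair, combined with vertex-by-vertex bookkeeping that tracks each gadget's contribution to each vertex's imbalance in isolation.
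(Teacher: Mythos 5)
This proposition is not proved in the paper at all: it is imported verbatim from Cygan et al.~\cite{CyganEtAl2014}, so there is no in-paper argument to compare against. Your overall strategy---a single polynomial, parameter-bounded reduction from \textsc{Multicolored Clique} with per-color selection gadgets and per-edge consistency gadgets, using arc multiplicities to encode vertex identities---is the standard technique and is in the same spirit as the construction in the cited source. The logical frame is also sound: \textsc{Multicolored Clique} is both NP-hard and W[1]-hard in $\ell$, so one reduction with $k=g(\ell)$ does deliver both conclusions.

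However, what you have written is a plan, not a proof. The entire burden of the argument sits inside the two gadgets, and neither is constructed: the selection gadget is described only by the property ``the only way to restore balance is to delete exactly one representative,'' and the edge gadget only by ``its imbalance is cancelled precisely when the chosen representatives are adjacent.'' These are exactly the properties whose realizability is in question, and your closing paragraph concedes that ensuring them (no substitution of auxiliary deletions for a representative, no accidental cancellation between non-adjacent representatives) is the main obstacle---without resolving it. Concretely, you would need to exhibit the arc multiplicities (typically a ``low/high'' dual encoding of vertex identifiers, since imbalance only constrains sums and a single numeric encoding can be spoofed by non-edges whose identifier sums coincide) and carry out the vertex-by-vertex bookkeeping in both directions. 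One further slip: in a multigraph, deleting a vertex changes each out-neighbor's in-degree by the \emph{multiplicity} of the connecting arcs, not ``by exactly one''; since your construction relies on inflated multiplicities elsewhere, the cancellation accounting must be done in terms of multiplicities throughout. As it stands, the proposal identifies the right approach but leaves the essential construction and its correctness unproven.
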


We will prove the hardness of {\sc Eulerian Strong Component Vertex Deletion} for $(k+1)$-strong digraphs by adding vertices ensuring this connectivity.

\eulerianstrongcomponentvdishard*
\begin{proof}
  We give a polynomial reduction from {\sc Directed Balanced Vertex Deletion}.
  Let $(G, k)$ an instance of {\sc Directed Balanced Vertex Deletion}.
  Let $G'$ arise from $G$ by adding vertices $z_1, \hdots, z_{k+1}$ and arcs $(z_i, v), (v, z_i)$ for all $v \in V(G)$ and all $i \in \{1, \hdots, k+1\}$.
  This construction obviously can be made to run in polynomial time.
  Moreover, $G'$ is $(k+1)$-strong as one needs to delete at least all $z_i$ to disconnect two vertices.
  All we have to show is that $(G,k)$ has a solution as instance of {\sc Directed Balanced Vertex Deletion} if and only if $(G', k)$ has a solution as instance of {\sc Eulerian Strong Component Vertex Deletion}.
	
  Let $S'$ be a solution to $(G', k)$ as instance of {\sc Eulerian Strong Components Vertex Deletion}.
  As $G'$ is $(k+1)$-strong, $G' - S'$ is strong.
  Moreover $S'$ is a solution, so $G' - S'$ is Eulerian (because it is the only strong component).
  Therefore, by Lemma~\ref{thm:equivalenteuleriandefinitions} every vertex of $G' - S'$ is balanced.
  Deleting the remaining vertices of $\{z_1, \hdots, z_{k+1}\}$ does not harm the balance of the remaining vertices, as for each $v \in V(G)$ and $z_i$ we delete one outgoing and one incoming arc of $v$.
  Thus $G' - (S' \cup \{z_1, \hdots, z_{k+1}\}) = G - (S' \setminus \{z_1, \hdots, z_{k+1}\})$ is balanced.
  Hence, $S' \setminus \{z_1, \hdots, z_{k+1}\}$ is a solution to $(G,k)$ as instance of {\sc Directed Balanced Vertex Deletion}.
	
  Let $S$ be a solution to $(G, k)$ as instance of {\sc Directed Balanced Vertex Deletion}.
  Then $G - S$ is balanced and by construction $G' - S$ as well.
  Furthermore, $G' -S$ is strong, and thus by Lemma~\ref{thm:equivalenteuleriandefinitions} also Eulerian.
  Hence, the only strong component of $G' - S$ is Eulerian and therefore~$S$ is a solution to $(G',k)$ as instance of {\sc Eulerian Strong Component Vertex Deletion}.
\end{proof}

\section{Bounded Size Strong Component Arc (Vertex) Deletion}
\label{sec:boundedsize}

In this section we show a fixed-parameter algorithm for the vertex deletion variant of {\sc Bounded Size Strong Component Vertex Deletion}.

\medskip
We give an algorithm that, given an $n$-vertex digraph $G$ and integers $k,s$, decides in time $4^k(ks+k+s)!\cdot n^{\mathcal{O}(1)}$ if $G$ has a set~$S$ of at most $k$ vertices such that every strong component of $G - S$ has size at most $s$.
Such a set $S$ will be called a \emph{solution} of the instance $(G,k,s)$.

The algorithm first executes the general steps ``Iterative Compression'' and ``Disjoint Solution''; it continues with a reduction to a skew separator problem.

\medskip
\noindent
\textbf{Reduction to Skew Separator Problem}
Now the goal is, given a digraph~$G$, integers $k,s \in\mathbb N$, and a solution $T$ of $(G,k+1,s)$, to decide if $(G,k,s)$ has a solution $S$ that is disjoint from $T$.
We solve this problem---which we call {\sc Disjoint Bounded Size Strong Component Vertex Deletion Reduction}---by reducing it to finding a small ``skew separator'' in one of a bounded number of reduced instances.
\begin{definition}
  Let $G$ be a digraph, and let $\mathcal X = (X_1,\hdots,X_t),\mathcal Y = (Y_1,\hdots,Y_t)$ be two ordered collections of $t\geq 1$ vertex subsets of $G$.
  A \emph{skew separator} $S$ for $(G,\mathcal X,\mathcal Y)$ is a vertex subset of $V(G)\setminus \bigcup_{i=1}^t(X_i\cup Y_i)$ such that for any index pair $(i,j)$ with $t\geq i\geq j\geq 1$, there is no path from~$X_i$ to $Y_j$ in the graph $G - S$.
\end{definition}
This definition gives rise to the {\sc Skew Separator} problem, which for a digraph~$G$, ordered collections $\mathcal X, \mathcal Y$ of vertex subsets of $G$, and an integer $k\in\mathbb N$, asks for a skew separator for $(G, \mathcal X, \mathcal Y)$ of size at most $k$.
Chen et al.~\cite{ChenEtAl2008} showed:
\begin{proposition}[{\cite[Thm. 3.5]{ChenEtAl2008}}]
\label{thm:skewsepfpt}
  There is an algorithm solving {\sc Skew Separator} in time $4^kk\cdot\mathcal O(n^3)$ for $n$-vertex digraphs $G$.
\end{proposition}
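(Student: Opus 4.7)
The plan is to sketch the branching algorithm of Chen et al., which hinges on Proposition~\ref{thm:fptenumerationofimportantseperators} together with a structural lemma that identifies where ``important'' separators sit inside a minimum skew separator.

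First I would isolate the ``hardest'' constraint. Since the pair $(i,1)$ must be separated for every $i \in \{1, \ldots, t\}$, any skew separator $S$ for $(G, \mathcal X, \mathcal Y)$ is in particular a $\big(\bigcup_{i=1}^{t} X_i\big) \to Y_1$-separator in $G$. By Proposition~\ref{thm:fptenumerationofimportantseperators}, the collection $\mathcal I$ of important $\big(\bigcup_i X_i\big) \to Y_1$-separators of size at most $k$ has $|\mathcal I| \leq 4^k$ and can be enumerated in time $4^k \cdot n^{\mathcal{O}(1)}$.

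The core of the argument is the structural claim that whenever $(G, \mathcal X, \mathcal Y, k)$ admits a skew separator of size at most $k$, it admits one that contains some $S' \in \mathcal I$. Assuming this claim, the algorithm branches over the $S' \in \mathcal I$: it deletes $S'$ from $G$, decrements the budget to $k - |S'|$, and recurses on the instance with sequences $(X_1,\ldots,X_t)$ and $(Y_2,\ldots,Y_t)$ (we may drop $Y_1$, as it is now unreachable from $\bigcup_i X_i$). Each recursive call strictly decreases $k+t$, so the recursion depth is at most $k+t$; the standard amortised estimate $\sum_{p=1}^{k} 4^p \cdot T(k-p) \leq k \cdot 4^k \cdot n^{\mathcal{O}(1)}$ then yields the overall running time $4^k k \cdot \mathcal{O}(n^3)$, where the $\mathcal{O}(n^3)$ factor accounts for one maximum-flow computation per branching step.

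The main obstacle is the structural claim. The argument I would follow starts with an arbitrary minimum skew separator $S^\circ$ and considers the reachability set $R = R^+_{G - S^\circ}\big(\bigcup_i X_i\big)$. The out-boundary $\partial^+(R) \subseteq S^\circ$ is a minimum $\big(\bigcup_i X_i\big) \to Y_1$-separator; by a standard ``pushing'' procedure along minimum separators, one replaces $\partial^+(R)$ by an important separator $S' \in \mathcal I$ with $R^+_{G - S'}\big(\bigcup_i X_i\big) \supseteq R$ and $|S'| \leq |\partial^+(R)|$. The delicate step is to verify that $(S^\circ \setminus \partial^+(R)) \cup S'$ remains a skew separator for all pairs $(i,j)$ with $j \geq 2$: one must show that enlarging the $X$-reachable region cannot create a fresh $X_i \to Y_j$ path for any $i \geq j \geq 2$. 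This follows because $S^\circ$ already separated $X_i$ from $Y_j$ for $j \geq 2$, and the push keeps the modified region inside the ``downstream'' closure of $\partial^+(R)$, a region that by the skew-ordering of the $Y_j$'s under $S^\circ$ cannot meet any $Y_j$ with $j \geq 2$; hence the unchanged portion $S^\circ \setminus \partial^+(R)$ still blocks every remaining forbidden path. Once this structural claim is settled, correctness of the branching and the running-time analysis are routine.
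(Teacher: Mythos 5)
Your overall branching scheme (enumerate important separators, branch, recurse on a shorter skew instance) matches the shape of the Chen et al.\ algorithm, but the structural lemma at its heart pushes in the wrong direction, and in the form you state it, it is false. You take $R = R^+_{G-S^\circ}\bigl(\bigcup_i X_i\bigr)$ and replace its out-boundary by an important $\bigl(\bigcup_i X_i\bigr)\to Y_1$-separator $S'$, maximising $R^+\bigl(\bigcup_i X_i\bigr)$ and thus pushing the cut towards $Y_1$. The enlarged region $R'' = R^+_{G-S'}\bigl(\bigcup_i X_i\bigr)$ is only guaranteed to avoid $Y_1$; nothing stops it from swallowing some $Y_j$ with $j\geq 2$, because paths from $X_1,\ldots,X_{j-1}$ into $Y_j$ are already permitted by $S^\circ$, and the push can promote these into paths from some $X_i$ with $i\geq j$. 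Your closing assertion that the downstream closure of $\partial^+(R)$ ``cannot meet any $Y_j$ with $j\geq 2$'' is exactly the step that fails.

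Concretely, take $t=2$, $X_1=\{x_1\}$, $X_2=\{x_2\}$, $Y_1=\{y_1\}$, $Y_2=\{y_2\}$, non-terminals $a,b$, and arcs $(x_1,a)$, $(x_2,a)$, $(a,b)$, $(a,y_2)$, $(b,y_1)$. The unique minimum skew separator is $\{a\}$. The only important $\{x_1,x_2\}\to y_1$-separator of size at most $1$ is $\{b\}$, since $R^+_{G-\{b\}}(\{x_1,x_2\})=\{x_1,x_2,a,y_2\}$ strictly contains $R^+_{G-\{a\}}(\{x_1,x_2\})=\{x_1,x_2\}$; but $\{b\}$ is not contained in $\{a\}$ and is not a skew separator (the path $x_2\to a\to y_2$ survives), so your branch exhausts the budget and wrongly answers ``no''. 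What Chen et al.\ actually exploit is the side where the reachable region avoids \emph{all} forbidden sinks: $X_t$ must be cut off from every $Y_j$, so $R^+_{G-S^\circ}(X_t)$ is disjoint from $\bigcup_j Y_j$, and one pushes its boundary to an important $X_t\to\bigcup_j Y_j$-separator. After replacement, any alleged $X_i\to Y_j$ path with $i\geq j$ would have to enter and then stay inside the enlarged region $R''$, contradicting $R''\cap\bigcup_j Y_j=\emptyset$. (Equivalently, by graph reversal one may use $\bigcup_i X_i\to Y_1$-separators, but ``important'' then has to mean maximising $R^-(Y_1)$, not $R^+\bigl(\bigcup_i X_i\bigr)$.) There is also a minor bookkeeping slip in the recursion: after a branch you should drop a source--sink \emph{pair} (e.g.\ both $X_t$ and $Y_t$), not $Y_1$ alone, so that $\mathcal X$ and $\mathcal Y$ keep equal length and the $(i,j)$ constraints carry over exactly.
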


The reduction from {\sc Disjoint Bounded Size Strong Component Vertex Deletion Reduction} to {\sc Skew Separator} is as follows.
As $T$ is a solution of $(G,k+1,s)$, we can assume that every strong component of $G - T$ has size at most $s$.
Similarly, we can assume that every strong component of $G[T]$ has size at most~$s$, as otherwise there is no solution~$S$ of $(G,k,s)$ that is disjoint from $T$.
Let $\{t_1,\hdots,t_{k+1}\}$ be a labeling of the vertices in $T$.

\begin{lemma}
\label{thm:constructcandidatesets}
  There is an algorithm that, given an $n$-vertex digraph $G$, integers $k,s\in\mathbb N$, and a solution~$T$ of $(G,k+1,s)$, in time $\mathcal{O}((ks+s-1)! )\cdot n^{\mathcal{O}(1)}$ computes a collection~$\mathcal C$ of at most $(ks+s-1)!$ vectors $C = (C_1,\hdots,C_{k+1})$ of length $k+1$, where $t_h\in C_h\subseteq V(G)$ for $h = 1,\hdots,k+1$, such that for some solution $S$ of $(G,k,s)$ disjoint from $T$, there is a vector $C\in \mathcal C$ such that the strong component of $G - S$ containing $t_h$ is exactly $G[C_h]$ for $h = 1,\hdots,k+1$.
\end{lemma}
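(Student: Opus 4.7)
My plan is to reduce the enumeration of vectors $(C_1,\ldots,C_{k+1})$ to the enumeration of their union $W := \bigcup_h C_h$, and then to enumerate candidate sets $W$ by a branching procedure whose number of leaves is exactly $(ks+s-1)!$. I start with the structural observation that for any solution $S$ of $(G,k,s)$ disjoint from $T$, if $C_h$ denotes the vertex set of the strong component of $G-S$ containing $t_h$, then $|C_h|\leq s$ and any two $C_h$ are either equal or disjoint, so $W$ contains $T$ and satisfies $|W|\leq (k+1)s=ks+s$. Crucially, $W\cap S = \emptyset$ by definition of the $C_h$, hence $G[W] = (G-S)[W]$ and the strong components of $G[W]$ containing the $t_h$'s are exactly $C_1,\ldots,C_{k+1}$. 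Thus the whole vector is recovered from $W$ in polynomial time by a strong-component computation on $G[W]$.

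It therefore suffices to produce a family $\mathcal{W}$ of at most $(ks+s-1)!$ candidate sets $W$ with $T\subseteq W\subseteq V(G)$ and $|W|\leq ks+s$, such that the true union belongs to $\mathcal{W}$ for every solution $S$. I would build such a family by an ordered construction: fix $t_1$ at position $1$ and, for each subsequent position $i = 2,3,\ldots,(k+1)s$, decide which vertex of $V(G)\setminus U_{i-1}$ (if any) to place at position $i$, where $U_{i-1}$ is the partial set built so far. To achieve the factorial bound, the branching at position $i$ must be limited to at most $(k+1)s - i + 1$ candidates drawn from a canonical list attached to $U_{i-1}$, namely the unused vertices of $T$ together with the next out-neighbours of $U_{i-1}$ in some fixed global ordering of $V(G)$. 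The product of branching factors then telescopes to $\prod_{i=2}^{(k+1)s}\bigl((k+1)s - i + 1\bigr) = (ks+s-1)!$, and each branch can be executed in polynomial time.

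The main obstacle is justifying the two requirements on the branching: \emph{completeness}, meaning that every union $W$ arising from a valid solution $S$ is produced by some branch, and \emph{tightness}, meaning that the branching factor at position $i$ never exceeds $(k+1)s - i + 1$. Completeness will use the fact that each $G[C_h]$ is strongly connected and that every strong component of $G-T$ has size at most $s$, so every vertex of $W\setminus T$ is reachable from $T$ by a short path confined to $W$; one then argues that some branch follows exactly the discovery order induced on $W$. Tightness is the subtler point and is what forces a factorial rather than exponential count; I expect to establish it via a charging argument in which each choice at step $i$ consumes one of the remaining slots of $W$, so that the pool of admissible candidates shrinks by one at every step. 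Combining completeness and tightness with the polynomial-time strong-component computation for each $W\in\mathcal{W}$ yields the claimed running time $\mathcal{O}((ks+s-1)!)\cdot n^{\mathcal{O}(1)}$.
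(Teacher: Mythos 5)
Your opening reduction is sound: since each $C_h$ is a strong component of $G-S$ and $S$ is disjoint from $W=\bigcup_h C_h$, we have $G[W]=(G-S)[W]$, and the strong components of $G[W]$ containing the $t_h$ are exactly the $C_h$; so recovering the full vector from the union $W$ by a polynomial-time strong-component computation is legitimate. The gap is in the second half, where you need to enumerate candidate sets $W$ with a branching factor of $(k+1)s-i+1$ at step $i$. As you yourself flag, this is ``the subtler point,'' but your proposed mechanism --- restrict each step to the unused $T$-vertices plus the next out-neighbours of $U_{i-1}$ in a \emph{fixed global ordering} of $V(G)$ --- would not give completeness: the out-degree of $U_{i-1}$ in $G$ is unbounded in $n$, and there is no reason the vertex of $W$ that should be added next is among the first $(k+1)s-i+1$ out-neighbours in an ordering chosen independently of $S$. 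You cannot filter by membership in $S$ either, since $S$ is exactly what is unknown. The ``charging argument'' you anticipate would establish that \emph{if} the pool has bounded size then the product telescopes, but it does not explain \emph{why} the pool is bounded.

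The paper's proof works around this with a different branching object. It does not pick one vertex at a time; instead, whenever some $C_h^i\subsetneq C_h$ it finds a witness path $P$ of length between $2$ and $s-|C_h^i|$ with both endpoints in $C_h^i$ and interior outside, and branches into: (a) one of the at most $s-|C_h^i|$ interior vertices of $P$ lies in $S$ (this decreases the budget $k$), or (b) all interior vertices of $P$ join $C_h^{i+1}$ (this decreases the remaining capacity $q=\sum_h(s-|C_h^i|)$). The crucial structural fact is that if no vertex of $P$ lies in $S$, then every interior vertex of $P$ is forced to lie in $C_h$, since it is reachable from and can reach $C_h^i\subseteq C_h$ inside $G-S$. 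Because \emph{only one} branch consumes capacity while the others consume the budget $k$, the recursion $f(k,q)\le(s-|C_h^i|)f(k-1,q)+f(k,q-1)$ yields the factorial bound $f(k,q)\le(q+k)!$. Your approach, by contrast, never tracks the deletion budget $k$ during the enumeration of $W$; that interleaving is precisely what caps the branching and is the missing idea in your proposal.
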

\begin{proof}
  Fix a hypothetical solution $S$ of $(G,k,s)$ that is disjoint from $T$.
  The algorithm computes, for each vertex $t_h\in T$, a set $C_h\ni t_h$ of at most $s$ vertices such that $C_h$ induces a strong component of $G - S$.
  (These sets $C_h$ must exist as~$S$ is required to be disjoint from~$T$.)
  Notice that the definition of~$C_h$ must only depend on~$t_h$ but not on $S$.
  Vertices in~$C_h$ (other than $t_h$) may or may not belong to $T$ and in particular it can be that $t_{h'}\in C_h$ for some $h'\in\{1,\hdots,k+1\}\setminus \{h\}$.
  Thus, for distinct $t_h,t_{h'}\in T$, sets~$C_h$ and~$C_{h'}$ possibly overlap.
	
  Intuitively, to compute $C_h$ define a ``candidate vertex for $t_h$'' as a vertex $u\in V(G)\setminus\{t_h\}$ that can potentially belong to the same strong component of $G - S$ as $t_h$, for a hypothetical solution~$S$ of size at most $k$ that is disjoint from $T$.
  We want to bound the number of candidate vertices for each $t_h\in T$, or, more precisely, the number of ``candidate sets'' $C_h$ for which $C_h\ni t_h$ can be exactly the vertex set of the strong component that contains~$t_h$, after deleting a set $S\subseteq V(G)\setminus T$ of at most $k$ vertices from $G$.
	
  Formally, the algorithm constructs sets $C_h$ iteratively by a simple branching algorithm along the following lines.
  It starts with an initial set $C_h^0 = \{t_h\}$ and a guessed set $S = \emptyset$.
  For $i\geq 0$, suppose that it has already constructed a set $C_h^i$ that must be a subset of $C_h$, and we want to either extend~$C_h^i$ to a proper superset $C_h^{i+1}$ or decide that $C_h = C_h^i$.
  If there is a path $P$ in $G$ of length at least two and length at most $s-|C_h^i|$ whose both end vertices are in $C_h^i$ and whose internal vertices are all outside~$C_h^i$, then it branches into two cases:
  \begin{itemize}
    \item either some vertex $u$ of $P$ belongs to the deletion set $S$ (meaning that we add $u$ to $S$),
	\item or the entire path $P$ belongs to the candidate set $C_h^{i+1}$ (meaning that we add the set $V^\circ(P)$ of all internal vertices of $P$ to $C_h^i$). 
  \end{itemize}
  Thus, in each branching step, either the size of $S$ strictly increases, or the size of $C_h^i$ strictly increases.
  Note that the size of~$S$ is bounded by $k$, and the size of $C_h^i\subseteq C_h$ is bounded by $s$.
  Hence, in the first branch, adding~$u$ to $S$ implies that the budget $k - |S|$ strictly decreases to $k - |S\cup\{u\}|$, whereas in the second branch, adding $V^\circ(P)$ to $C_h^i$ strictly decreases the budget $s - |C_h^i|$ to $s - |C_h^i\cup V^\circ(P)|$.
  We repeat this branching until the size of $S$ reaches the limit of~$k$ or the size of~$C_h^i$ reaches the limit of $s$, or if there are no paths left of length at most $s - |C_h^i|$ with both end vertices inside $C_h^i$ and all internal vertices outside~$C_h^i$.
  At this point, the set~$C_h^i$ will not be further extended, and $C_h := C_h^i$ is a candidate set for~$t_h$.
  This completes the algorithm description.
	
  We analyze the run time of the algorithm.
  To construct all possible vectors\linebreak $C = (C_1,\hdots,C_{k+1})\in \mathcal C$, the search tree that arises from the branching has a number of leaves that is bounded by a function $f$ of $k$ and $q$ only, where $q = \sum_{h=1}^{k+1}(s-|C_h^i|)$ is the sum of the remaining capacities of the $C_h$'s.
  By the above branching, this function satisfies the recursion $f(k,q)\leq (s-|C_h^i|)f(k-1,q) + f(k,q-1)$, as in the first branch there are at most $s-|C_h^i|$ choices for vertex $u$ each of which reduces the budget of $k - |S|$ by 1, and one branch which reduces the budget of $q$ by the number of internal vertices of $P$ which is at least~$1$.

  To obtain an upper bound on the growth of $f$, we first notice that $f(0,q) = 1$ for all $q\in\mathbb N$ and $f(k,0) = 1$ for all $k\geq 0$.
  We then claim that $f(k,q)\leq (q + k)!$, since by induction for $k,q\in\mathbb N$ it holds
  \begin{eqnarray*}
    f(k,q) & \leq & (s - |C_i^h|)f(k-1,q) + f(k,q-1)\\
           & \leq & (s - |C_i^h|)(q+k-1)! + (q-1+k)!\\
           & =    & (s - |C_i^h| + 1)(q+k-1)!\\
		   & =    & \left(\frac{s - |C^i_h| + 1}{q+k}\right)(q+k)!\\
		   & \leq &   (q+k)!,
  \end{eqnarray*}
  where in the last inequality we used that $q \geq s - |C_i^h|$ and $k\geq 1$.
  Hence, the search tree has at most $(q+k)!$ leaves, each leaf corresponding to some vector $C\in\mathcal C$.
  The initial capacity $q$ satisfies $q = (k+1)(s-1)$, and thus $|\mathcal C|\leq (ks+s-1)!$.
  Since each branching step can be executed in polynomial time, the search tree (and hence the set $\mathcal C$) can be constructed in time $(q+k)!\cdot n^{\mathcal{O}(1)}$.
  Thus, the overall run time is $(q+k)!\cdot n^{\mathcal O(1)} = (ks + s - 1)!\cdot n^{\mathcal O(1)}$.
%
\end{proof}
Observe that for each vertex $t_h\in T$ each set $C_h$ contains at most $s$ vertices, and together with the run time of the algorithm in Lemma~\ref{thm:constructcandidatesets} directly implies that $\mathcal C$ contains at most $(ks + s - 1)!$ vectors.

Armed with Lemma~\ref{thm:constructcandidatesets}, we can hence restrict our search for a solution $S$ of $(G,k,s)$ disjoint from~$T$ to those $S$ that additionally are ``compatible'' with a vector in $\mathcal C$.
Formally, a solution~$S$ of $(G,k,s)$ is \emph{compatible} with a vector $C = (C_1,\hdots,C_{k+1})\in\mathcal C$ if the strong component of $G - S$ containing $t_h$ is exactly $C_h$ for $h = 1,\hdots,k+1$.
%
%
For a given vector $C = (C_1,\hdots,C_{k+1})$, to determine whether a solution~$S$ of $(G,k,s)$ disjoint from $T$ and compatible with $C$ exists, we create several instances of the {\sc Skew Separator} problem.
To this end, note that if two sets $C_h,C_{h'}$ for distinct $t_h,t_h'\in T$ overlap, then actually $C_h = C_{h'}$ (and $t_h,t_h'\in C_h$).
So for each set $C_h$ we choose exactly one (arbitrary) \emph{representative $T$-vertex} among all $T$-vertices in~$C_h$ with consistent choice over overlapping (and thus equal) $C_h$'s.
Let $T'\subseteq T$ be the set of these representative vertices.
Now we generate precisely one instance $(G',\mathcal X_{\sigma'},\mathcal Y_{\sigma'},k)$ of {\sc Skew Separator} for each permutation~$\sigma'$ of~$T'$.
The graph $G'$ is the same in all these instances, and is obtained from~$G$ by replacing each unique set~$C_h$ by two vertices $t_h^+,t_h^-$ (where $t_h$ is the representative of~$C_h$), and connecting all vertices incoming to~$C_h$ in~$G$ by an in-arc to $t_h^+$ and all vertices outgoing from~$C_h$ in~$G$ by an arc outgoing from~$t_h^-$.
This way also arcs of the type $(t_j^-, t_h^+)$ are added but none of type $(t_j^-, t_h^-)$, $(t_j^+, t_h^-)$ or $(t_j^+, t_h^+)$.
Notice that this operation is well-defined and yields a simple digraph~$G'$, even if $t_{h'}\in C_h$ for some distinct $h,h'$.
The sets $\mathcal X_{\sigma'}$ and $\mathcal Y_{\sigma'}$ of ``sources'' and ``sinks'' depend on the permutation~$\sigma'$ with elements $\sigma'(1),\hdots,\sigma'(|T'|)$: let $\mathcal X_{\sigma'} = (t^-_{\sigma'(1)},\hdots,t^-_{\sigma'(|T'|)})$ and let $\mathcal Y_{\sigma'} = (t^+_{\sigma'(1)},\hdots,t^+_{\sigma'(|T'|)})$.

Thus, per triple $((G,k,s), T, C)$ we generate at most $|T'|! \leq |T|! = (k+1)!$ instances $(G',\mathcal X_{\sigma'},\mathcal Y_{\sigma'},k)$, the number of permutations of $T'$.

We now establish the correctness of this reduction, in the next two lemmas:
\begin{lemma}
\label{thm:correctnessreduceskewseparator}
  If an instance $(G,k,s)$ admits a solution $S$ disjoint from $T$, compatible with $C$ and for which $(t_{\sigma'(1)},\hdots,t_{\sigma'(|T'|)})$ is a topological order of the connected components of $G' - S$, then~$S$ forms a skew separator of size~$k$ for $(G,\mathcal X_{\sigma'},\mathcal Y_{\sigma'})$.
\end{lemma}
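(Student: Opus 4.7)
The plan is to track paths in $G'-S$ back into $G-S$ and leverage both the compatibility of $S$ with $C$ and the assumed topological order.

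First, I would verify that $S$ is an admissible candidate for a skew separator at all, i.e.\ $S\subseteq V(G')\setminus\bigcup_i(X_i\cup Y_i)$. Since $S$ is disjoint from $T$, it contains none of the representatives $t_h$. Because $S$ is compatible with $C$, each $C_h$ is a (vertex set of a) strong component of $G-S$, so $S\cap\bigcup_h C_h=\emptyset$. Thus $S$ lies in $V(G)\setminus\bigcup_h C_h\subseteq V(G')$ and avoids all $t_h^+,t_h^-$, which is exactly what we need.

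Next I would argue that no path from $t^-_{\sigma'(i)}$ to $t^+_{\sigma'(j)}$ in $G'-S$ can exist when $i\geq j$. The key structural observation about $G'$ is that, by construction, every vertex $t_h^+$ has only incoming arcs and every vertex $t_h^-$ has only outgoing arcs. Hence, along any directed path in $G'-S$, vertices of the form $t_h^+$ can appear only as the last vertex and vertices of the form $t_h^-$ only as the first. So a hypothetical path $P=(t^-_{\sigma'(i)},v_1,\dots,v_{\ell-1},t^+_{\sigma'(j)})$ has all internal vertices lying in $V(G)\setminus\bigcup_h C_h$. Unfolding the first and last arcs according to the construction, $P$ corresponds to a genuine path $Q$ in $G-S$ from some vertex $u\in C_{\sigma'(i)}$ to some vertex $w\in C_{\sigma'(j)}$ whose internal vertices all lie outside $\bigcup_h C_h$ (the ``direct-arc'' case $\ell=1$ corresponds to an arc $(u,w)$ from $C_{\sigma'(i)}$ to $C_{\sigma'(j)}$ in $G-S$).

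The proof then splits in two cases. If $i>j$, the path $Q$ induces an edge in the condensation of $G-S$ from the strong component $C_{\sigma'(i)}$ to the strong component $C_{\sigma'(j)}$, contradicting the assumption that $(t_{\sigma'(1)},\dots,t_{\sigma'(|T'|)})$ is a topological order of the strong components of $G'-S$ (equivalently, of the $C_h$'s in $G-S$). If $i=j$, then $u,w\in C_{\sigma'(i)}$; since $C_{\sigma'(i)}$ is strong in $G-S$ there is a $w\to u$ path inside $C_{\sigma'(i)}$, and concatenating it with $Q$ yields a directed cycle in $G-S$ that contains vertices outside $C_{\sigma'(i)}$. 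This contradicts the maximality of $C_{\sigma'(i)}$ as a strong component of $G-S$, which again follows from the compatibility of $S$ with $C$.

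The only real subtlety is the $i=j$ case, where we must exploit compatibility (not just the weaker fact that $(t_{\sigma'(h)})_h$ is a topological order), together with the one-sided adjacency of $t^+_h$ and $t^-_h$ in $G'$; everything else is a direct translation between $G'-S$ and $G-S$.
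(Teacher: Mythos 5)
Your proof is correct and follows essentially the same structure as the paper's: argue by contradiction, split according to whether the offending $X_i\to Y_j$ path connects two distinct representative components ($i>j$, contradicting the topological order) or the same one ($i=j$). Your handling of the $i=j$ case is in fact a bit cleaner than the paper's — you invoke compatibility directly (the cycle would enlarge the strong component, contradicting that $C_{\sigma'(i)}$ is exactly the strong component of $G-S$ containing $t_{\sigma'(i)}$), whereas the paper argues via the size bound $|C_h\cup V(P_2)|>s$ using the construction of $C_h$ — and you also usefully add the preliminary check that $S$ avoids all $t_h^\pm$, which the paper leaves implicit.
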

\begin{proof}
  Suppose, for the sake of contradiction, that the claim is false. 
  Then one of the two following cases must hold:
  \begin{itemize}
    \item For two vertices $t_h,t_{h'}\in T'$ with $\sigma'(h) < \sigma'(h')$, there would be a path~$P_1$ from~$t^-_{h'}$ to~$t^+_{h}$ in $G' - S$.
      This corresponds to a $C_{h'} \rightarrow C_{h}$ path in $G - S$.
      Either there is also a $C_{h} \rightarrow C_{h'}$ path in $G$ meaning that $C_{h} = C_{h'}$ in contradiction to our choice of $T'$ or the topological order $\sigma'$ of strong components was incorrect (as $C_{h'}$ must be before $C_h$)
    \item The in-vertex $t_h^-$ of the strong component containing $v_h$ would be reachable from the out-vertex $t_h^+$ of this strong component in the graph $G' - S$, because then the component would contain all the vertices on this path $P_2$ from $t_h^+$ to $t_h^-$, and by the way we constructed~$C_h$, the size of $|C_h\cup V(P_2)|$ in $G$ would be at least $s+1$, contradicting that the strong component of $G' - S$ containing $v_h$ has at most $s$ vertices.
		\qedhere
	\end{itemize}
\end{proof}

\begin{lemma}
\label{thm:smallskewsepimpliesdisjointcompatsoln}
  Conversely, if $S$ is a skew separator of $(G',\mathcal X_{\sigma'},\mathcal Y_{\sigma'})$ with size at most $k$, then~$S$ is a solution of $(G,k,s)$ disjoint from $T$ and compatible with~$C$.
\end{lemma}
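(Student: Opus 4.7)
The plan is to verify four properties of $S$: (i) $|S|\le k$, immediate from being a skew separator; (ii) $S\cap T=\emptyset$; (iii) for each $t_h\in T$, the strong component of $G-S$ containing $t_h$ equals $G[C_h]$; and (iv) every strong component of $G-S$ disjoint from $T$ has size at most $s$.

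Property (ii) is immediate from the construction of $G'$: a skew separator $S$ avoids every $t_h^+$ and $t_h^-$, so translated back to $G$ it is disjoint from $\bigcup_h C_h \supseteq T$. For (iv), any strong component $D$ of $G-S$ with $D\cap T=\emptyset$ lies in $V(G)\setminus T$ and is witnessed by cycles on $D$-vertices only, so $D$ is also strong in $G-T$; since $T$ is a solution of $(G,k+1,s)$, we get $|D|\le s$. Together with $|C_h|\le s$ from Lemma~\ref{thm:constructcandidatesets}, (iii) will then also bound the sizes of the strong components that contain some $t_h$.

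Towards (iii), each $C_h$ is strong in $G$ by construction and $S\cap C_h=\emptyset$, so $G[C_h]$ is contained in the strong component of $G-S$ that contains $t_h$. For the reverse inclusion, suppose for contradiction that some $v\notin C_h$ lies in this strong component, giving walks $P\colon t_h\to v$ and $Q\colon v\to t_h$ in $G-S$. The central translation step is that any walk from $a$ to $b$ in $G-S$ whose sequence of entered blobs is $C_{h_1},\dots,C_{h_r}$ yields, in $G'-S$, a corresponding sequence of walks
\[
x_0 \;\to\; t_{h_1}^+,\qquad t_{h_1}^-\;\to\; t_{h_2}^+,\qquad \dots,\qquad t_{h_{r-1}}^-\;\to\; t_{h_r}^+,\qquad t_{h_r}^-\;\to\; x_1,
\]
where $x_0=a$ (respectively $x_1=b$) if $a$ (respectively $b$) lies outside every blob, and the first (respectively last) walk is absorbed into $t_{h_1}^+$ (respectively $t_{h_r}^-$) otherwise.

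Setting $\pi=(\sigma')^{-1}$, the skew separator property---no walk $t_{\sigma'(i)}^-\to t_{\sigma'(j)}^+$ with $i\ge j$ in $G'-S$---converts each middle walk above into a strict inequality $\pi(h_{i-1})<\pi(h_i)$. If $v\in C_{h'}$ for some $h'\ne h$, applying the translation to $P$ yields a chain $\pi(h)<\cdots<\pi(h')$ and applying it to $Q$ yields $\pi(h')<\cdots<\pi(h)$, a contradiction. If $v\notin\bigcup_{h'}C_{h'}$, then $v$ survives in $G'$ and the last walk of $P$ (ending at $v$) truly concatenates in $G'-S$ with the first walk of $Q$ (starting at $v$) to give one extra walk $t_\alpha^-\to t_\beta^+$; the resulting inequality closes the chain into $\pi(h)\le\pi(\alpha)<\pi(\beta)\le\pi(h)$, again a contradiction. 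The main obstacle I expect is the bookkeeping in this translation---organising the positions of $a$, $b$, and $v$ relative to the blobs, and handling degenerate sub-cases where $P$ or $Q$ does not pass through any blob beyond $C_h$ itself.
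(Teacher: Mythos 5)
Your proof is correct, and it takes a genuinely different (and in one respect more complete) route than the paper's. The paper proceeds by contradiction directly on the solution property: it assumes some strong component $Q$ of $G-S$ has size $>s$, argues that $Q$ must meet both $C=\bigcup_h C_h$ and its complement, picks a closed walk $K$ in $Q$ meeting both, and splits into two cases according to whether $K$ visits one blob or several; in each case the walk projects to a forbidden $t^-\to t^+$ walk in $G'-S$. The paper never explicitly argues compatibility with $C$ --- it only proves that $S$ is a solution --- and its justification that ``$G[C]$ contains no strong component of size more than $s$'' is delicate: it is clear for the \emph{correct} candidate vector (the one coming from a hypothetical solution, where the $C_h$'s really are strong components of $G-S^\star$), but a priori not for every $C\in\mathcal C$ that the algorithm tries; the paper's Case~2 silently covers the gap. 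Your decomposition sidesteps this entirely: you first prove disjointness and the bound on components avoiding $T$ (both direct from the definitions and from $T$ being a solution of $(G,k+1,s)$), then establish compatibility --- the strong component of $t_h$ in $G-S$ is exactly $C_h$ --- via the walk-to-$G'$ translation and the chain of strict $\pi$-inequalities obtained from the skew-separator property, and finally deduce the size bound on the remaining components from compatibility plus $|C_h|\le s$. Both proofs hinge on the same engine (a walk in $G-S$ crossing a sequence of blobs yields forbidden $t_{h_i}^-\to t_{h_{i+1}}^+$ walks in $G'-S$), but you apply it to the pair of walks $t_h\to v\to t_h$ rather than to a closed walk $K$ inside a large $Q$. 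The net effect is that you actually prove the compatibility claim the lemma states, which the paper's proof asserts but does not verify, and you avoid relying on a structural claim about $G[C]$ that is not obviously valid for arbitrary candidate vectors. The bookkeeping you flag at the end (positions of $a$, $b$, $v$ relative to the blobs and the degenerate cases where one of $P$, $Q$ stays in $C_h$) is real but routine; your sketch already handles the essential cases, including the closing of the $\pi$-chain when $v$ lies outside all blobs.
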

\begin{proof}
  Suppose, for the sake of contradiction, that $S$ is not a solution for $(G,k,s)$.
  Then there is some strong component $Q$ in $G - S$ of size more than~$s$.
  By abuse of notation let $C =  \cup_{h=1}^{k+1} C_h$.
  Since neither $G[C]$ (by choice of $C$) nor $V(G) - C$ (as subdigraph of $G[V(G)\setminus T]$) contain strong components of size more than $s$, this component $Q$ must contain vertices from both~$C$ and $V(G)\setminus C$.
  Let $K$ be a closed walk of $Q$ that intersects both $G[C]$ and $G[V(G)\setminus C]$. Such a closed walk~$K$ must exist by $Q$ being strong.
	
  We consider two cases:
  \begin{itemize}
    \item The closed walk $K$ intersects a single unique component $C_h$.
      Then all other vertices of~$K$ are in $V(G)\setminus C_h$.
      Let $t_h$ be the representative of $C_h$.
      As $K$ intersects $G[V(G)\setminus C]$, $K$ leaves and enters $C_h$ at least once.
      This means that there is a walk $P_1$ in $G' - S$ that starts with the vertex~$t^-_h$ and ends with the vertex $t^+_h$, and all internal vertices of $P_1$ (of which there is at least one) are outside~$T'$.
      But this contradicts the assumption that $S$ is a skew separator for the tuple $(G',(t^-_{\sigma'(1)},\hdots,t^-_{\sigma'(|T'|)}),(t^+_{\sigma'(1)},\hdots,t^+_{\sigma'(|T'|)}))$ that should cut all walks from $t_h^-$ to~$t_h^+$.
    \item The closed walk $K$ intersects several different components $C_h$.
      Let $(C_{h_1},\hdots,C_{h_d},C_{h_1})$ be the order of components that we encounter when traversing along the walk $K$, starting from an arbitrary component $C_{h_1}$, where $d > 1$.
      Let $(t_{h_1},\hdots,t_{h_d},t_{h_1})$ be the corresponding representative vertices.
      Then there must be an index $j$ such that $h_j$ occurs after $h_{j+1\pmod{d+1}}$ in $(\sigma'(1), \sigma'(2), \dots \sigma'(|T'|)$.
      Hence in $G' - S$ is no path from $t^-_{h_j}$ to $t^+_{h_{j+1}}$.
      Now consider the subpath $P_2$ of $K$ that starts from the component $C_{h_j}$ and ends at component~$C_{h_{j+1}}$ and has its interior disjoint from both.
      Since all internal vertices on $P_2$ (by definition of~$P_2$) are not in any $C_h$, all such internal vertices of $P_2$ must be from $G - S - \cup_{i=1}^{|T'|}(X_i\cup Y_i)$, and the path $P_2$ corresponds to a path $P_2'$ in the graph $G' - S$ that starts from vertex $t^-_{h_j}$ and ends at vertex $t^+_{h_{j+1}}$.
      Again, this contradicts the assumption that $S$ is a skew separator for~$(G',\mathcal X_{\sigma'},\mathcal Y_{\sigma'})$.
  \end{itemize}
  Thus, the skew separator $S$ for $(G',\mathcal X_{\sigma'},\mathcal Y_{\sigma'})$ is a solution for $(G,k,s)$.
\end{proof}

In summary, we have reduced a single instance to the compression problem \textsc{Disjoint Bounded Size Strong Component Vertex Deletion Reduction} to at most $|\mathcal C|\cdot|T'|!$ instances $(G',\mathcal X_{\sigma'},\mathcal Y_{\sigma'},k)$ of the {\sc Skew Separator} problem, where each such instance corresponds to a permutation $\sigma'$ of $T'$.
The reduction just described implies that:
\begin{lemma}
  An input $(G,k,s,T)$ to the {\sc Disjoint Bounded Size Strong Component Vertex Deletion} problem is a ``yes''-instance if and only if at least one of the instances\linebreak $(G',\mathcal X_{\sigma'},\mathcal Y_{\sigma'},k)$ is a ``yes''-instance for the {\sc Skew Separator} problem.
\end{lemma}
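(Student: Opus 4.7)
The plan is to derive the claimed equivalence directly from the two preceding lemmas together with the enumeration guarantee of Lemma~\ref{thm:constructcandidatesets}. The backward direction is essentially already packaged, so I would dispatch it first; the forward direction requires a short extra argument to exhibit the right pair $(C,\sigma')$.

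For the backward direction, suppose that for some $C\in\mathcal C$ and some permutation~$\sigma'$ of the representative set~$T'$, the instance $(G',\mathcal X_{\sigma'},\mathcal Y_{\sigma'},k)$ admits a skew separator~$S$ of size at most~$k$. By construction, the vertex set of~$G'$ is $(V(G)\setminus \bigcup_h C_h)\cup\{t_h^+,t_h^-\}$, and since skew separators must avoid $\bigcup_i(X_i\cup Y_i)$, the set~$S$ lies in $V(G)\setminus T$ (in particular it is disjoint from~$T$) and can be regarded as a vertex subset of~$G$. Lemma~\ref{thm:smallskewsepimpliesdisjointcompatsoln} then tells us that~$S$ is a solution of~$(G,k,s)$ disjoint from~$T$, which is exactly a ``yes''-answer for the {\sc Disjoint Bounded Size Strong Component Vertex Deletion} instance $(G,k,s,T)$.

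For the forward direction, suppose that $(G,k,s,T)$ is a ``yes''-instance, witnessed by a solution~$S\subseteq V(G)\setminus T$ of size at most~$k$. By Lemma~\ref{thm:constructcandidatesets}, the collection~$\mathcal C$ built by our branching algorithm contains a vector $C = (C_1,\dots,C_{k+1})$ such that for every $h$ the strong component of $G - S$ containing $t_h$ is exactly $G[C_h]$. Pass to the representative set $T'\subseteq T$ obtained by picking, for each distinct $C_h$, one $T$-vertex (consistently across coincidences). The components $\{G[C_h] : t_h \in T'\}$ are pairwise distinct strong components of $G - S$, so they are linearly ordered by any topological ordering of the condensation of $G - S$; let $\sigma'$ be the permutation of~$T'$ induced by such a topological ordering (with earlier components first). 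Then $(t_{\sigma'(1)},\dots,t_{\sigma'(|T'|)})$ is a topological order of the components $G[C_{\sigma'(1)}],\dots,G[C_{\sigma'(|T'|)}]$ in $G - S$, and Lemma~\ref{thm:correctnessreduceskewseparator} implies that $S$ is a skew separator of size at most~$k$ for $(G',\mathcal X_{\sigma'},\mathcal Y_{\sigma'})$. Hence at least one of the enumerated instances is a ``yes''-instance of {\sc Skew Separator}.

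The main (minor) obstacle is being careful with the forward direction: one must verify that the representatives in~$T'$ lie in pairwise distinct strong components of~$G - S$ (so that a well-defined topological order~$\sigma'$ exists), and that both the vector~$C$ produced by Lemma~\ref{thm:constructcandidatesets} and the resulting permutation~$\sigma'$ belong to the set of guesses the algorithm iterates over. Both points follow from the consistent choice of representatives in $T'$ and the fact that $\mathcal C$ and the set of permutations of~$T'$ are enumerated exhaustively.
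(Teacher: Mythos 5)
Your proof is correct and takes essentially the same route as the paper, which itself leaves this lemma implicit (``The reduction just described implies that:'') and expects the reader to assemble Lemmas~\ref{thm:constructcandidatesets}, \ref{thm:correctnessreduceskewseparator}, and \ref{thm:smallskewsepimpliesdisjointcompatsoln} exactly as you do. One small wrinkle in the forward direction: you first fix an arbitrary witness $S$ and then invoke Lemma~\ref{thm:constructcandidatesets} to get a vector $C$ matching that particular $S$, but the lemma as stated only guarantees a matching vector for \emph{some} solution disjoint from $T$; you should instead let Lemma~\ref{thm:constructcandidatesets} hand you the pair $(S,C)$ simultaneously and work with that $S$ -- this is a trivial rephrasing and does not affect the argument.
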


So we invoke the algorithm of Proposition~\ref{thm:skewsepfpt} for each of the instances\linebreak $(G',\mathcal X_{\sigma'},\mathcal Y_{\sigma'},k)$.
If at least one of them is a ``yes''-instance then $(G,k,s,T)$ is a ``yes''-instance, otherwise $(G,k,s,T)$ is a ``no''-instance.
Hence, we conclude that {\sc Disjoint Bounded Size Strong Component Vertex Deletion Reduction} is fixed-parameter tractable with respect to the joint parameter $(k,s)$, and so is {\sc Bounded Size Strong Component Vertex Deletion}.
The overall run time of the algorithm is thus bounded by $|\mathcal C|\cdot|T'|!\cdot n^{\mathcal O(1)} \cdot 4^kkn^3 = (ks + s - 1)! \cdot (k+1)!\cdot 4^k \cdot n^{\mathcal O(1)} = 4^k(ks + k + s)!\cdot n^{\mathcal O(1)}$.
This completes the proof of Theorem~\ref{thm:boundedsizedeletion_main}.





\section{1-Out-Regular Arc (Vertex) Deletion}
\label{sec:inducedcycles}
In this section we give a fixed-parameter algorithm for the vertex deletion variant of Theorem~\ref{thm:2regulareuleriandeletion_main}.
Let $G$ be a digraph and let $k\in\mathbb N$.
A \emph{solution} for $(G,k)$ is a set $S$ of at most~$k$ vertices of $G$ such that every non-trivial strong component of $G - S$ is 1-out-regular.

We first apply the steps ``Iterative Compression'' and ``Disjoint Solution'' from Sect.~\ref{sec:generalsteps}.
This yields the {\sc Disjoint 1-Out-Regular Vertex Deletion Reduction} problem, where we seek a solution~$S$ of $(G,k)$ that is disjoint from and smaller than a solution~$T$ of $(G,k+1)$.

Then we continue with the technique of covering of shadows, as described in Sect.~\ref{sec:generalsteps}.
In our setting, let $\mathcal F$ be the collection of vertex sets of $G$ that induce a strong graph different from a simple directed cycle.
Then clearly $\mathcal F$ is $T$-connected and any solution~$S$ must intersect every such induced subgraph.

So we can use Proposition~\ref{thm:deterministiccoveringoftheshadow} on $(G, k, T)$ to construct sets $Z_1,\hdots,Z_t$ with $t \leq 2^{\mathcal{O}(k^2)}\log^2n$ such that one of these sets covers the shadow of our hypothetical solution $S$ with respect to $T$.
For each $Z_i$ we construct an instance, where we assume that $Z = Z_i \setminus T$ covers the shadow.
Note that a vertex of $T$ is never in the shadow.
As we assume that $Z \cup T$ is disjoint from a solution, we reject an instance if $G[Z \cup T]$ contains a member of $\mathcal{F}$ as a subgraph.
\begin{observation}
\label{obs:shadowcontainsnononcycle}
  $G[Z \cup T]$ has no subgraph in $\mathcal{F}$.
\end{observation}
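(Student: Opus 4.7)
The observation is, in effect, a restatement of the rejection rule described in the paragraph immediately above: an instance $Z_i$ is kept only if $G[Z\cup T]$ (with $Z=Z_i\setminus T$) contains no member of $\mathcal F$. Thus the only thing the proof actually has to verify is that this rejection rule is sound, i.e., that whenever $G[Z\cup T]$ does contain some $F\in\mathcal F$, there cannot exist a solution $S$ of $(G,k)$ that is disjoint from $Z\cup T$. Once this is established, the observation follows by construction.

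The plan is a short argument by contradiction. Suppose $F\subseteq Z\cup T$ with $F\in\mathcal F$, so that $G[F]$ is strong and is not a simple directed cycle, and suppose nevertheless that $S$ is a solution of $(G,k)$ disjoint from $Z\cup T$. Then $S\cap F=\emptyset$, the strong subgraph $G[F]$ lies entirely inside $G-S$, and there is a unique strong component $C$ of $G-S$ with $F\subseteq V(C)$. Because $|F|\ge 2$, the component $C$ is non-trivial, so since $S$ is a solution $G[C]$ must be $1$-out-regular and hence a simple directed cycle.

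The remaining structural step---which I expect to be the only slightly delicate point---is the remark that any induced sub-digraph of a simple directed cycle is a disjoint union of directed paths, so the only induced subset of its vertex set which is itself strongly connected is the whole vertex set of the cycle. Applied to the strong set $F\subseteq V(C)$, this forces $F=V(C)$, whence $G[F]=G[C]$ is a simple directed cycle, contradicting $F\in\mathcal F$. Therefore no valid $S$ can coexist with such an $F$, the rejection step discards only instances that admit no solution disjoint from $Z\cup T$ anyway, and the observation follows immediately.
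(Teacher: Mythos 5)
Your proof is correct, and it matches the paper's (implicit) reasoning. The paper gives no separate proof for this observation: it follows by construction, since the algorithm explicitly rejects any instance in which $G[Z\cup T]$ contains a member of $\mathcal F$, and the soundness of that rejection rests on the earlier unproven remark that ``any solution $S$ must intersect every such induced subgraph.'' Your argument supplies exactly that missing justification, with the key structural step being that a non-trivial strong induced subdigraph of a simple directed cycle must be the entire cycle. This is a useful elaboration of what the paper treats as clear, not a different route.
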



Normally, one would give a ``torso'' operation which transforms $(G,k)$ with the use of $Z$ into an instance $(G',k')$ of the same problem which has a shadowless solution if and only if the original instance has any solution.
Instead, our torso operation reduces to a similar problem while maintaining solution equivalence.

\medskip
\noindent
\textbf{Reducing the Instance by the Torso Operation.}
Our torso operation works directly on the graph.
It reduces the original instance to one of a new problem called {\sc Disjoint Shadow-less Good 1-Out-Regular Vertex Deletion Reduction}; afterwards we show the solution equivalence.

\begin{definition}
\label{defn:torso}
  Let $(G,T,k)$ be an instance of {\sc Disjoint 1-Out-Regular Vertex Deletion Reduction} and let $Z\subseteq V(G)$.
  Then $\mathsf{torso}(G,Z)$ defines the digraph with vertex set $V(G)\setminus Z$ and \emph{good} and \emph{bad} arcs.
  An arc $(u,v)$ for $u,v \not\in Z$ is introduced whenever there is an $u \rightarrow v$ path in $G$ (of length at least 1) whose internal vertices are all in $Z$.
  We mark $(u, v)$ as \emph{good} if this path $P$ is unique and there is no cycle $O$ in $G[Z]$ with $O \cap P \not = \emptyset$.
  Otherwise, we mark it as a \emph{bad} arc.
\end{definition}
Note that every arc between vertices not in $Z$ also forms a path as above.
Therefore $G[V(G) \setminus Z]$ is a subdigraph of $\mathsf{torso}(G,Z)$.
Also, $\mathsf{torso}(G,Z)$ may contain self-loops at vertices~$v$ from cycles with only the vertex $v$ outside of~$Z$.
In $\mathsf{torso}(G,Z)$, we call a cycle \emph{good} if it consists of only good arcs.
(A non-good cycle in $\mathsf{torso}(G,Z)$ can contain both good arcs and bad arcs.)

Now we want to compute a vertex set of size $k$ whose deletion from $G' = \mathsf{torso}(G,Z)$ yields a digraph whose every non-trivial strong component is a cycle of good arcs.
We call this problem {\sc Disjoint Shadow-less Good 1-Out-Regular Vertex Deletion Reduction}.
To simplify notation we construct a set $\mathcal{F}_\textsf{bad}$ which contains all strong subdigraphs of $G$ that are not trivial or good cycles.
Then $S$ is a solution to $G'$ if and only if $G' - S$ contains no subdigraph in $\mathcal{F}_\textsf{bad}$.
In the next lemma we verify that our new problem is indeed equivalent to the original problem, assuming that there is a solution disjoint from $Z$.

\begin{lemma}[torso preserves obstructions]
\label{thm:torsopreservesobstructions}
  Let $G$ be a digraph, $T, Z\subseteq V(G)$ as above and $G' = \mathsf{torso}(G,Z)$.
  For any $S \subseteq V(G) \setminus (Z \cup T)$ it holds that $G - S$ contains a subdigraph in~$\mathcal{F}$ if and only if $G' - S$ contains a subdigraph in $\mathcal{F}_\textsf{bad}$.
\end{lemma}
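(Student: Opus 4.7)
The plan is to prove both directions by relating strong subdigraphs of $G - S$ to strong subdigraphs of $G' - S$ via contraction/expansion of $Z$-internal paths.

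For the forward direction, given $V_F \subseteq V(G) \setminus S$ inducing a subdigraph in $\mathcal{F}$, I set $V_F' := V_F \setminus Z$ and argue three things. First, $V_F' \neq \emptyset$, because otherwise $V_F \subseteq Z \subseteq Z \cup T$, contradicting Observation~\ref{obs:shadowcontainsnononcycle}. Second, $G'[V_F']$ is strong: for $u, v \in V_F'$, take a $u \to v$ path in $G[V_F]$ and list its $V_F'$-vertices; each consecutive segment has internals in $V_F \cap Z \subseteq Z$ and so, by Definition~\ref{defn:torso}, contributes an arc of $G'$. Third, $G'[V_F']$ is not a good cycle. Suppose, for contradiction, it is the good cycle $v_1 \to \cdots \to v_d \to v_1$ certified by unique $Z$-internal paths $P_i$. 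I first use uniqueness to show $V^\circ(P_i) \subseteq V_F$: pick any $v_i \to v_{i+1}$ path in $G[V_F]$ and reduce it to a segment with internals in $V_F \cap Z$, then identify it with $P_i$. This forces $V_F = V_F' \cup \bigcup_i V^\circ(P_i)$: any extra $w \in V_F \cap Z$ would, by strong connectivity of $G[V_F]$, yield a second $v_a \to v_b$ path through $Z$, either breaking uniqueness of some $P_i$ ($b = a+1$) or adding an off-cycle arc to $G'[V_F']$ ($b \neq a+1$). Finally I rule out any arc in $G[V_F]$ beyond those of the concatenated cycle $C = P_1 \circ \cdots \circ P_d$ by a short case analysis on the endpoints: direct arcs between $V_F'$-vertices duplicate some $P_i$, mixed arcs shortcut some $P_i$, and arcs between two $Z$-vertices either shortcut one $P_i$ or bridge $P_a$ with $P_b$ for $a \neq b$, adding an off-cycle arc in $G'[V_F']$. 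Thus $G[V_F] = C$ is a simple cycle, contradicting $V_F \in \mathcal{F}$.

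For the backward direction, given $V_F' \in \mathcal{F}_\textsf{bad}$ disjoint from $S$, I pick any $v \in V_F'$ and let $C$ be the strong component of $G - S$ containing $v$. Expanding the arcs of $G'[V_F']$ back to paths in $G$ shows $V_F' \subseteq V(C)$. Suppose, for contradiction, that $G[C]$ is a simple directed cycle. No cycle $O$ in $G[Z]$ can meet $C$: since $O \subseteq G - S$, maximality of $C$ forces $O \subseteq C$; then $O$ is a cycle inside the simple cycle $G[C]$, so $O = C$ and $V(C) \subseteq Z$, contradicting $v \notin Z$. Next, the $Z$-internal path between any two consecutive non-$Z$ vertices of $C$ is unique in $G$: an alternative path using a $Z$-vertex $z \notin V(C)$ would put $z$ in the strong component of $C$ (contradicting $z \notin V(C)$), while one using only vertices of $V(C)$ coincides with the cycle segment since $G[C]$ is induced and a simple cycle. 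Hence every arc of $G'[V(C) \setminus Z]$ is good and $G'[V(C) \setminus Z]$ is a good cycle. Since a proper induced subgraph of a simple directed cycle is strong only if it is a single vertex, either $V_F' = V(C) \setminus Z$ (so $G'[V_F']$ is a good cycle) or $|V_F'| = 1$ (so $G'[V_F']$ is trivial), both contradicting $V_F' \in \mathcal{F}_\textsf{bad}$.

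The main obstacle is the case analysis in the third claim of the forward direction: ruling out extra arcs of $G[V_F]$ between two $Z$-vertices is the delicate case, because it is the only one where one has to simultaneously exploit the uniqueness of the $P_i$ and the no-$Z$-cycle clause of Definition~\ref{defn:torso}. A smaller wrinkle is the degenerate regime $|V_F'| = 1$, where a good cycle is really a good self-loop; the same arguments apply, but one must read Definition~\ref{defn:torso}'s uniqueness clause carefully in the $u = v$ case.
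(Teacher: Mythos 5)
Your proof is correct in substance and takes a genuinely different route from the paper in \emph{both} directions.

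For the direction from $G'-S$ to $G-S$ (your backward direction), the paper explicitly \emph{constructs} a member of $\mathcal{F}$: it replaces each good arc of $F'$ by its unique $Z$-internal path, and for each bad arc inserts \emph{all} $Z$-internal paths and any witnessing $Z$-cycle, then checks that the result is strong and not a cycle. You instead observe that the strong component $C$ of $G-S$ containing $V_F'$ cannot be a simple cycle, arguing by contradiction that otherwise $G'[V(C)\setminus Z]$ would be a good cycle whose only strong subgraphs are singletons and itself. This is cleaner: you never have to reason about whether the added paths and cycles fit together, and the ``no $Z$-cycle meets $C$'' and ``$Z$-internal paths inside $C$ are unique'' observations do all the work.

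For the direction from $G-S$ to $G'-S$ (your forward direction), the paper sets $F'=\mathsf{torso}(F,Z)$, uses the Observation to find $v\in V(F)\setminus(Z\cup T)$ \emph{and} a vertex $t\in V(F)\cap T$, takes a closed walk $O$ through $v$ and $t$, proves $O$ is a cycle, then finds a path $R$ attached to $O$ and does a four-way case analysis on the position of $R$. You work directly with $V_F'=V_F\setminus Z$, show $G'[V_F']$ is strong and nonempty, and then, assuming it is a good cycle $v_1\to\cdots\to v_d\to v_1$, reconstruct $G[V_F]$ as the concatenated cycle $P_1\circ\cdots\circ P_d$ and rule out extra vertices and arcs. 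Notably, your argument does not use $T$ at all beyond Observation~\ref{obs:shadowcontainsnononcycle}.

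One place where your sketch is thinner than it should be is the claim that an extra vertex $w\in V_F\cap Z$, $w\notin\bigcup_i V^\circ(P_i)$, ``yields a second $v_a\to v_b$ path through $Z$''. Concatenating a $v_a\to w$ path and a $w\to v_b$ path (both with $Z$-internals) gives a \emph{walk}, and when $b=a+1$ the reduction of that walk to a path may drop $w$ and coincide with $P_a$, in which case you do \emph{not} directly get a second path. The conclusion still holds, but via the \emph{other} clause of Definition~\ref{defn:torso}: the loop that was excised from the walk is a closed walk in $G[Z]$ containing $w$, it meets the remaining path $P_a$ at the vertex where the loop was cut, and hence yields a $Z$-cycle intersecting $P_a$, so $(v_a,v_{a+1})$ is bad anyway. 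The analogous care is needed when $v_a=v_b$ (self-loop, including the degenerate $d=1$ case, which you already flag). So the step is salvageable, but the justification you give (``a second path'') is not literally what happens in all cases; you should make the loop/$Z$-cycle argument explicit there, just as you already do in the ``two $Z$-vertices'' case of the final arc-exclusion analysis.
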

\begin{proof}
  In the forward direction, if $G' - S$ contains a subgraph $F' \in \mathcal{F}_\textsf{bad}$, we can replace the arcs of $F'$ as follows:
  All good arcs are replaced by their unique path in the torso operation.
  For a bad arc $(x,y)$ we insert all $x \rightarrow y$-paths whose internal vertices completely belong to $Z$.
  If there is only a single such path $P$ then by definition there is a cycle $O$ in~$G[Z]$ that intersects~$P$.
  We also insert all cycles $O$ of this type.
  Call the resulting graph~$F$.
  
  This digraph $F$ is a subdigraph of $G - S$ and is strong, as $F'$ was strong and all added vertices have a path from and to $V(F')$.
  Now, either $F'$ was not a cycle, then $F$ is also not a cycle or it contained a bad arc and we have inserted at least two parallel paths or a cycle.
  In any case, we have $F \in \mathcal{F}$.
	
	\looseness=-1
  In the backward direction, let $G - S$ have a subdigraph $F \in \mathcal{F}$.
  Assume for contradiction that $G' - S$ has no subdigraph in~$\mathcal{F}_\textsf{bad}$.
  We will show that $F' = \mathsf{torso}(F, Z) \in \mathcal{F}_\textsf{bad}$.
  Note that the torso operation preserves subdigraph relations and connection.
  By Observation~\ref{obs:shadowcontainsnononcycle} we know that there is a $v \in V(F) \setminus (Z \cup T)$.
  Furthermore, we know that there is also a $t \in V(F) \cap T$ as~$T$ is a solution to $G$.
  From $Z \cap T = \emptyset$ by definition we know that $v, t \not \in Z$ and hence in $V(F')$.
  As~$F$ is strong, there is a closed walk $O$ through $v$ and~$t$ in~$F$.
	
  \begin{claim}
  \label{thm:oisacycle}
    $O$ is a cycle.
  \end{claim}
  \begin{proof}[Proof of Claim~\ref{thm:oisacycle}]
  \renewcommand{\qedsymbol}{$\diamond$}
    Suppose, for sake of contradiction, that $O$ is not a cycle.
	Let $w$ a vertex that is visited at least twice when traversing~$O$.
	Let $x_1, w, y_1$ be the first traversal and $x_2, w, y_2$ be the second one.
	Without loss of generality, we can assume that $x_1, x_2, y_1, y_2 \not\in Z$ by replacing them by the next vertex outside of $Z$.
	If $w \in Z$ then the arcs $(x_1, y_1), (x_1, y_2), (x_2, y_1), (x_2, y_2)$ all exist in the strong subdigraph $\mathsf{torso}(O, Z)$.
	Therefore, $\mathsf{torso}(O,Z) \in \mathcal{F}_\textsf{bad}$ in contradiction to the fact that $\mathsf{torso}(O,Z)$ is a subdigraph of $G' - S$.
	Else, arcs $(x_1,w), (x_2,w), (w, y_1), (w, y_2)$ would exist, giving the same contradiction.
    This completes the proof of the claim.
  \end{proof}
	
  Now $F$ is strong and not a cycle, and therefore has to contain a (possibly closed) $x \rightarrow y$ path $R$ with the following properties:
  \begin{itemize}
  	\item $x, y \in V(O)$,
  	\item $R$ contains no arc from $O$,
  	\item all internal vertices of $R$ are disjoint of $V(O)$.
  \end{itemize}  
  
  Then there are paths $O_x$ and $O_y$ in $O$ such that their endpoints are not in $Z$ but all their interior vertices and furthermore $x \in V(O_x)$ respectively $y \in V(O_y)$.
  If $x \not \in Z$ (resp. $y \not \in Z$), set $x_1 = x_2 = x$ (resp. $y_1 = y_2 = y)$.
	
  If $R$ contains some interior vertex $u \not\in Z$, the path $O_x[x_1, x] \circ R[x, u]$ is in~$F$ and shrinks to a $x_1 \rightarrow u$-path in $F'$.
  As $u \notin V(O)$ we get that $x_1$ has at least two out-arcs $(x_1, x_2),(x_1, u)$ in $F'$ and therefore $F' = \mathsf{torso}(F, Z) \in \mathcal{F}_\textsf{bad}$, a contradiction.
  Thus, the interior of $R$ lies in $Z$.
  Furthermore, if $(x_1, x_2) \not= (y_1, y_2)$ then $O_x[x_1, x] \circ R \circ O_y[y, y_2]$, is a $x_1 \rightarrow y_2$ path in $F$.
  Note that $x_2\not=y_2$ as $O$ is a cycle and $(x_1,x_2)\not = (y_1,y_2)$.
  Therefore the path is shrunk by the torso operation to the arc $(x_1,y_2)$.
  But then $x_1$ has two outgoing arcs in $F'$ and as $F'$ is still strong, $F' = \mathsf{torso}(F, Z) \in \mathcal{F}_\textsf{bad}$.
  Therefore, we have $(x_1, x_2) = (y_1, y_2)$ and also $O_x = O_y$ as otherwise the arc would be bad (because there are two different $x_1 \rightarrow x_2$-paths).
  If $x$ lies before $y$ on~$O_x$ the path $P = O_x[x_1, x] \circ R \circ O_x[y, x_2]$ is a $x_1 \rightarrow x_2$-path in $F$.
  As the interior of $O_x$ and~$R$ is in~$Z$ this would give a second $x_1 \rightarrow x_2$-path, making $(x_1, x_2)$ bad.
  The last case is if $y$ lies before $x$ on $O_x$.
  Then $R \circ O_x[y,x]$ forms a cycle in $Z$ which intersects $O_x$ at least in the vertex~$x$, again proving that $(x_1, x_2)$ should be bad.
\end{proof}

The above lemma shows that $S$ is a solution of an instance $(G, T, k)$ for {\sc Disjoint 1-Out-Regular Vertex Deletion Reduction} disjoint of $Z$ if and only if it is a solution of $(\mathsf{torso}(G, Z), T, k)$ for {\sc Disjoint Shadow-less Good 1-Out-Regular Vertex Deletion Reduction}.
As connections between vertices are preserved by the torso operation and the torso graph contains no vertices in $Z$, we can reduce our search for $(\mathsf{torso}(G, Z), T, k)$ to shadow-less solutions (justifying the name).

\medskip
\noindent
\textbf{Finding a Shadowless Solution.}
Consider an instance $(G, T, k)$ of {\sc Disjoint Shadow-less Good 1-Out-Regular Vertex Deletion Reduction}.
Normally, after the torso operation a pushing argument is applied.
However, we give an algorithm that recovers the last connected component of $G$.
As $T$ is already a solution, but disjoint of the new solution $S$, we take it as a starting point of our recovery.
Observe that, without loss of generality, each vertex $t$ in $T$ has out-degree at least one in $G - T\setminus\{t\}$, for otherwise already $T - t$ is a solution.

Consider a topological order of the strong components of $G - S$, say $C_1,\hdots,C_\ell$, i.e., there can be an arc from $C_i$ to $C_j$ only if $i < j$.
We claim that the last strong component $C_\ell$ in the topological ordering of $G - S$ contains a non-empty subset~$T_0$ of $T$.
For if $C_\ell$ did not contain any vertex from $T$, then the vertices of $C_\ell$ cannot reach any vertex of $T$, contradicting that~$S$ is a shadowless solution of $(G,k)$.

Since $T_0$ is the subset of $T$ present in $C_\ell$ and arcs between strong components can only be from earlier to later components, we have that there are no outgoing arcs from $C_\ell$ in $G-S$.

We guess a vertex $t$ inside $T_0$.
This gives $|T| \leq k + 1$ choices for $t$.
For each guess of $t$ we try to find the component $C_\ell$, similarly to the bounded-size case.
The component $C_\ell$ will either be trivial or not.

If $C_\ell$ is a trivial component, then $V(C_\ell) = \{t\}$, and so we delete all out-neighbors of~$t$ in $G - T$ and place them into the new set $S$.
Hence, we must decrease the parameter $k$ by the number of out-neighbors of $t$ in $G - T$, which by assumption is at least one.

Else, if the component $C_{\ell}$ is non-trivial, define $v_0 = t$ and notice that exactly one out-neighbor~$v_1$ of $v_0$ belongs to $C_\ell$.
Set $i = 0$ and notice that every out-neighbor of $v_i$ other than~$v_{i+1}$ must be removed from the graph~$G$ as $C_\ell$ is the last component in the topological ordering of $G - S$, there is no later component where those out-neighbors could go.
This observation gives rise to a natural branching procedure: we guess the out-neighbor $v_{i+1}$ of $v_i$ that belongs to $C_\ell$ and remove all other out-neighbors of $v_i$ from the graph.
We then repeat this branching step with $i\mapsto i+1$ until we get back to the vertex~$t$ of $T_0$ we started with.
This way, we obtain exactly the last component~$C_\ell$, forming a cycle.
This branching results in at least one deletion as long as $v_i$ has out-degree at least two.
If the out-degree of $v_i$ is exactly one, then we simple proceed by setting $v_i := v_{i+1}$ (and increment $i$).
In any case we stop early if $(v_i, v_{i+1})$ is a bad arc, as this arc may not be contained in a strong component.

Recall that the vertices $t = v_0,v_1,\hdots$ must \emph{not} belong to $S$, whereas the deleted out-neighbors of $v_i$ must belong to $S$.
From another perspective, the deleted out-neighbors of $v_i$ must \emph{not} belong to $T$.
So once we reached back at the vertex $v_j = t$ for some $j\geq 1$, we have indeed found the component $C_\ell$ that we were looking for.

Let us shortly analyze the run time of the branching step.
As for each vertex~$v_i$, we have to remove all its out-neighbors from $G$ except one and include them into the hypothetical solution~$S$ of size at most~$k$, we immediately know that the degree of $v_i$ in $G$ can be at most $k + 1$.
Otherwise, we have to include~$v_i$ into $S$.
Therefore, there are at most $k+1$ branches to consider to identify the unique out-neighbor $v_{i+1}$ of $v_i$ in $C_{\ell}$.
So for each vertex $v_i$ with out-degree at least two we branch into at most $k+1$ ways, and do so for at most $k$ vertices, yielding a run time of $O((k+1)^k)$ for the entire branching.

Once we recovered the last strong component $C_\ell$ of $G - S$, we remove the set $V(C_\ell)$ from~$G$ and repeat: we then recover $C_{\ell-1}$ as the last strong component, and so on until $C_1$.

\medskip
\noindent
\textbf{Algorithm for Disjoint 1-Out-Regular Vertex Deletion Reduction.}\linebreak
Lemma~\ref{thm:torsopreservesobstructions} and the branching procedure combined give a bounded search tree algorithm for {\sc Disjoint 1-Out-Regular Vertex Deletion Reduction}:

\begin{enumerate}[labelindent=2em,labelwidth=2pt,label=\underline{Step~\arabic*.},itemindent=3.5em,leftmargin=0pt]
  \renewcommand{\theenumi}{Step~\arabic{enumi}}
  \item\label{itm:step1} For a given instance $I = (G,T,k)$, use Proposition~\ref{thm:deterministiccoveringoftheshadow} to obtain a set of instances $\{Z_1,\hdots,Z_p\}$ where $p \leq 2^{\mathcal O(k^2)}\log^2n$, and Lemma~\ref{thm:torsopreservesobstructions} implies
  \begin{itemize}
    \item If $I$ is a ``no''-instance then all reduced instances $(\mathsf{torso}(G,Z_j), T, k)$ are ``no''-instances, for $j = 1,\hdots,p$.
    \item If $I$ is a ``yes''-instance then there is at least one $i\in\{1,\hdots,p\}$ such that there is a solution $S^\star$ for $I$ which is a shadowless solution for the reduced instance $(\mathsf{torso}(G,Z_i), T, k)$.
  \end{itemize}
  So at this step we branch into $p \leq 2^{\mathcal{O}(k^2)}\log^2n$ directions.
  \item\label{itm:step2} For each of the instances obtained from \ref{itm:step1}, recover the component~$C_\ell$ by guessing the vertex $t = v_0$. Afterwards, recover $C_{\ell-1},\hdots,C_1$ in this order.
    
  So at this step we branch into at most $\mathcal{O}(k\cdot(k+1)^k)$ directions.
\end{enumerate}
%
We then repeatedly perform \ref{itm:step1} and \ref{itm:step2}.
Note that for every instance, one execution of \ref{itm:step1} and \ref{itm:step2} gives rise to $2^{\mathcal{O}(k^2)}\log^2n$ instances such that for each instance, we either know that the answer is ``no'' or the budget~$k$ has decreased, because each important separator is non-empty.
Therefore, considering a level as an execution of \ref{itm:step1} followed by \ref{itm:step2}, the height of the search tree is at most~$k$.
Each time we branch into at most\linebreak $2^{\mathcal{O}(k^2)}\log^2n\cdot \mathcal{O}(k\cdot (k+1)^k)$ directions.
Hence the total number of nodes in the search tree is
\begin{eqnarray*}
  \left(2^{\mathcal{O}(k^2)}\log^2n\right)^k\cdot \mathcal{O}\left(k\cdot (k+1)^k\right) &= &\left(2^{\mathcal{O}(k^2)}\right)^k\left(\log^2n\right)^k  \cdot \mathcal{O}\left((k+1)^{k+1}\right)\\
  & = &    2^{\mathcal{O}(k^3)} \cdot 2^{\mathcal{O}(k \log k)}\left(\log^2n\right)^k\\
  & = & 2^{\mathcal{O}(k^3)}\cdot \mathcal{O}\left(((2k\log k)^k + n/2^k)^3\right)\\
  & = &  2^{\mathcal{O}(k^3)}\cdot \mathcal{O}(n^3) \enspace .
\end{eqnarray*}
We then check the leaf nodes of the search tree and see if there are any strong components other than cycles left after the budget $k$ has become zero.
If for at least one of the leaf nodes the corresponding graph only has strong components that are cycles then the given instance is a ``yes''-instance.
Otherwise, it is a ``no''-instance.
This gives an $2^{\mathcal{O}(k^3)}\cdot n^{\mathcal{O}(1)}$-time algorithm for {\sc Disjoint 1-Out-Regular Vertex Deletion Reduction}.
So overall, we have an $2^{\mathcal{O}(k^3)}\cdot n^{\mathcal{O}(1)}$-time algorithm for the {\sc 1-Out-Regular Vertex Deletion} problem.

\section{Polynomial Parameter Transformations Between Arc Deletion and Vertex Deletion}
\label{sec:ppts}
In this section we prove the existence of polynomial parameter transformations between {\sc Bounded Size Strong Component Arc Deletion} and {\sc Bounded Size Strong Component Vertex Deletion} in both directions, as well as a polynomial parameter transformation from {\sc 1-Out-Regular Arc Deletion} to {\sc 1-Out-Regular Vertex Deletion}.
These complete the proofs of Theorem~\ref{thm:boundedsizedeletion_main} and Theorem~\ref{thm:2regulareuleriandeletion_main}.

\subsection{Bounded Size Strong Component Deletion:\PDFcompatibleLineBreak Polynomial Parameter Transformation from Arc to Vertex Version}
\label{sec:boundedsizereductionarc2vertex}
Our first transformation reduces an instance of {\sc Bounded Size Strong Component Arc Deletion} to an instance of {\sc Bounded Size Strong Component Vertex Deletion}.
This completes the proof of Theorem~\ref{thm:boundedsizedeletion_main}.
While our transformation keeps the parameter $k$ constant, the parameter~$s$ increases to $(k + 1) s^3$.
This is due to a replacement of all vertices by complete graphs of size roughly $ks^2$, therefore increasing the size of eligible components.

\begin{lemma}
	\label{thm:boundedsize_arctovertex}
	Given an instance $(G,k,s)$ of {\sc Bounded Size Strong Component Arc Deletion} we can compute in polynomial time a solution-wise equivalent instance $(G', k', s')$ of {\sc Bounded Size Strong Component Vertex Deletion} with $k' = k$ and $s' =  (k + 1) s^3$.
\end{lemma}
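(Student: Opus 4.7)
The idea is to enlarge each vertex of $G$ into a complete digraph large enough to be ``indelible'' within budget $k$, and to subdivide every arc of $G$ with a fresh vertex (the standard arc-to-vertex trick). First preprocess $G$ so that every ordered pair of vertices has at most $k+1$ parallel arcs; this is valid since no arc set of size $\leq k$ can remove more than $k$ arcs between a given pair. Setting $N := (k+1)(s^2 - s + 1)$ (so $N > k$ since $s^2 - s + 1 \geq 1$), build $G'$ as follows: for every $v \in V(G)$ introduce a fresh set $K_v$ of $N$ vertices together with arcs in both directions between every pair of distinct vertices of $K_v$ (so $K_v$ becomes a strongly connected clique); for every arc $e = (u,v) \in A(G)$ introduce a fresh subdivision vertex $w_e$ with arcs $(x, w_e)$ for every $x \in K_u$ and $(w_e, y)$ for every $y \in K_v$. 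Set $k' := k$ and $s' := (k+1)s^3$. The construction is polynomial.

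For the forward direction, given a feasible arc set $S \subseteq A(G)$ with $|S| \leq k$, take $S' := \{w_e : e \in S\}$. Each strong component of $G' - S'$ is either a lone subdivision vertex $w_e$ (for an arc whose endpoints lie in different strong components of $G - S$), or of the form $\bigcup_{v \in C} K_v \cup \{w_e : e \in A(G[C]) \setminus S\}$ for some strong component $C$ of $G - S$. The latter has size at most $|C|N + (k+1)|C|(|C|-1) \leq sN + (k+1)(s^2-s) = (k+1)s^3 = s'$, using the preprocessing bound and the choice of $N$. Conversely, given a feasible vertex set $S' \subseteq V(G')$ with $|S'| \leq k$, define $S := \{e : w_e \in S'\}$ and suppose for contradiction some strong component $C$ of $G - S$ has $|C| \geq s+1$. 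Because $N > k \geq |S'|$, each $K_v$ for $v \in C$ retains at least $N - k \geq 1$ vertex in $G' - S'$, and every subdivision vertex $w_e$ for $e \in A(G[C]) \setminus S$ survives. These residues all lie in a single strong component $C'$ of $G' - S'$, connected via the paths $K_u \to w_e \to K_v$ for each arc $e = (u,v) \in A(G[C]) \setminus S$. Hence $|C'| \geq |C|N - |S'| \geq (s+1)N - k = (k+1)s^3 + 1 > s'$, a contradiction; so $S$ is a feasible arc solution.

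The main technical obstacle is the delicate balancing of the gadget size $N$: it must be strictly greater than $k$ so that no entire $K_v$ can be deleted---otherwise a vertex solution could destroy gadgets and the backward direction would fail---yet small enough that legitimate strong components of $G' - S'$, namely $\bigcup_{v \in C} K_v$ glued by surviving subdivision vertices, do not exceed $s' = (k+1)s^3$. The sharp choice $N = (k+1)(s^2 - s + 1)$ satisfies both bounds simultaneously, and the multi-arc preprocessing is essential to control the subdivision-vertex contribution in the forward direction.
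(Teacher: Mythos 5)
Your construction is identical to the paper's: $N = (k+1)(s^2-s+1)$ equals the paper's $s_a + k + 1$ with $s_a = (k+1)s(s-1)$, the parallel-arc preprocessing, subdivision vertices and clique gadgets all match, and the forward and backward size bounds are the same (you just phrase the backward direction as a contradiction starting from a too-large component of $G-S$, whereas the paper bounds each strong component of $G'-S'$ directly). This is essentially the paper's proof.
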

\begin{proof}
We first bound the number of parallel arcs in $G$.
Note that if there are more than $k+1$ arcs between a pair of vertices running in the same direction, we can remove additional arcs as at least one of these arcs remains after the removal of $k$ arcs.
Thus we can restrict ourselves to instances with at most $k+1$ parallel arcs per ordered vertex pair.
In such digraphs any subdigraph with at most $s$ vertices has at most $s_a := (k+1) s (s-1)$ arcs.
The idea is now to subdivide the arcs by a vertex and replace the original vertices by complete directed graphs of size $s_a + k + 1$.
Then the addition of a original vertex to a strong component has more impact than the artificial vertices needed to subdivide the arcs.
Formally, we define our new digraph~$G'$ as follows:
\begin{eqnarray*}
  V(G') & = & \{v_i~|~v\in V(G), 1 \leq i \leq s_a + k + 1\} \cup\{u_a~|~a\in A(G)\},\\
  A(G') & = & \{(v_i,v_j)~|~v\in V(G), 1 \leq i,j \leq s_a + k + 1, i \not = j\}\\
  &&\cup \{(v_i,u_a)~|~a = (v,w) \in A(G), 1\leq i \leq s_a + k +1\}\\
  &&\cup\{(u_a,w_i)~|~a = (v,w) \in A(G), 1\leq i \leq s_a + k +1\} .
\end{eqnarray*}
Finally we set $s' = s (s_a + k + 1) + s_a = (k + 1)s^3$ and get the resulting instance $(G', k, s')$ of {\sc Bounded Size Strong Component Vertex Deletion}.
It remains to show that the two instances are indeed solution-wise equivalent.

For the forward direction, let $S$ be a set of at most $k$ arcs such that every strong component of $G-S$ has at most $s$ vertices.
Let $S' = \{u_a~|~a\in S\}$.
By construction we have that $G' - S'$ is equivalent to applying above transformation to the graph $G - S$.
As our transformation preserves connectedness we have a one to one correspondence between strong components of $G' - S'$ and $G - S$.
Let $X'$ be a strong component of $G' - S'$ and $X$ its corresponding set in $G -S$.
We know that $E(G[X])$ has size at most $(k+1) |X| (|X| - 1) \leq s_a$.
Thus, $X'$ contains at most~$s_a$ vertices of type~$u_a$.
Furthermore, there are at most $|X|(s_a + k + 1)  \leq s (s_a + k + 1)$ vertices of type $v_i$.
Hence, we have $|X'| \leq s_a + s (s_a + k + 1) = s'$, and by $|S'| = |S| \leq k$ we know that $S'$ is a valid solution to $(G', k, s')$.

For the reverse direction, let $S'$ be a set of at most $k$ vertices such that every strong component of $G'-S'$ has at most $s'$ vertices.
Then, we claim that the set \mbox{$S = \{a \in A(G)~|~u_a \in S'\}$} is a solution to $(G,k,s)$.
Obviously, $|S| \leq |S'| \leq k$.
We now want to show that the strong components in $G-S$ do contain at most~$s$ vertices.
As $s_a + k +1 > k$ we know that for every $v \in V(G)$ at least one $v_i$ remains in $G'-S'$.
Because all~$v_i$ have the same neighbors, removing the vertices of type $v_i$ from $S'$ does not change connectivity of $G - S'$.
Now again there is a one-to-one correspondence between the strong components of $G-S$ and~$G' - S'$.
The strong components of $G' - S'$ are missing at most $k$ vertices of type~$v_i$ which are in $S'$.
Let $X$ be a strong component in~$G' - S'$.
Let $W \subset V(G)$ be the set of all vertices $w \in V(G)$ in $G$ such that $X$ contains a vertex~$w_i$.
If $|W| > s$ then $X$ contains at least $(s_a + k + 1)|W| - k \geq (s_a + k + 1) s + s_a + k + 1 - k = s' + 1$ vertices, a contradiction to the fact that~$S'$ was solution for $(G', k, s')$.
Thus, $|W| \leq s$ and by the one to one correspondence of strong components, we know that $W$ is indeed a strong component of $G - S$.
As~$X$ was chosen arbitrarily and all strong components of $G - S$ have a counterpart in $G' - S'$, this completes the proof.
\end{proof}

\subsection{Bounded Size Strong Component Deletion:\PDFcompatibleLineBreak Polynomial Parameter Transformation from Vertex to Arc Version}
\label{sec:boundedsizereductionvertex2arc}
Here we state a transformation from {\sc Bounded Size Strong Component Vertex Deletion} to {\sc Bounded Size Strong Component Arc Deletion}.
This transformation is not needed for any theorem, but we state it here nonetheless for completeness.
Note that, unlike the reduction in backwards direction, the parameter increase of $s$ is only linear and does not depend on $k$.

\begin{lemma}
\label{thm:boundedsize_vertextoarc}
  Given an instance $(G,k,s)$ of {\sc Bounded Size Strong Component Vertex Deletion} we can compute in polynomial time a solution-wise equivalent instance $(G', k', s')$ of {\sc Bounded Size Strong Component Arc Deletion} with $k' = k$ and $s' = 2s$.
\end{lemma}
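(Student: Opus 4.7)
The plan is to apply the standard vertex-splitting gadget, reinforced by parallel copies. For each $v\in V(G)$, I introduce two vertices $v^-,v^+$ joined by a single \emph{internal arc} $(v^-,v^+)$. For each arc $(u,v)\in A(G)$, I introduce $k+1$ parallel copies of the \emph{external arc} $(u^+,v^-)$. Set $k'=k$ and $s'=2s$. The role of the $k+1$ copies is to make each external arc ``uncuttable'' by any solution of size at most $k$: since only $k$ arcs can be removed, at least one copy of every external arc survives, which forces any arc-deletion solution to act, effectively, only on internal arcs $(v^-,v^+)$. The latter correspond one-to-one with the vertices of $G$, which gives the desired bijection.

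In the forward direction, given a vertex solution $S$ with $|S|\le k$ for $(G,k,s)$, I would set $S'=\{(v^-,v^+):v\in S\}$, so that $|S'|\le k$. For each $v\in S$, the vertices $v^-$ and $v^+$ are trivial strong components of $G'-S'$, since $v^-$ loses its only outgoing arc and $v^+$ its only incoming arc. The remaining non-trivial strong components of $G'-S'$ are precisely the ``lifts'' $C^\star:=\{v^-,v^+:v\in C\}$ of the strong components $C$ of $G-S$; strongness of $C^\star$ is witnessed by the internal arcs of $C^\star$ together with at least one surviving copy of each external arc corresponding to an arc of $G[C]$. Since $|C^\star|=2|C|\le 2s=s'$, the set $S'$ is a valid arc solution of the required size.

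In the reverse direction, let $S'$ be an arc solution of size at most $k$ for $(G',k',s')$. First I would argue that $S'$ may be assumed to contain no external arcs: if $S'$ contains a copy of some external arc $(u^+,v^-)$, then since $|S'|\le k$ and there are $k+1$ parallel copies, at least one copy survives in $G'-S'$, so removing that copy from $S'$ does not change the strong components of $G'-S'$; iterating reduces to the case $S'\subseteq\{(v^-,v^+):v\in V(G)\}$ without increasing $|S'|$. Setting $S=\{v:(v^-,v^+)\in S'\}$ then gives $|S|\le k$, and the bijection $C\mapsto C^\star$ runs in reverse: every non-trivial strong component of $G'-S'$ equals $C^\star$ for some strong component $C$ of $G-S$, so $|C|=|C^\star|/2\le s'/2=s$.

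The only delicate point is the ``drop external arcs'' reduction in the reverse direction; once it is established, both directions reduce to a mechanical check of the size-doubling bijection between strong components of $G-S$ and non-trivial strong components of $G'-S'$, which yields the claimed bound $s'=2s$ while keeping $k'=k$.
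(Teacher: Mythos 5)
Your proof follows essentially the same route as the paper: the identical vertex-splitting gadget with $k+1$ parallel copies of each external arc, the same choice of $S'$ in the forward direction, and the same parallel-arc argument to restrict $S'$ to internal arcs in the reverse direction. You are in fact a bit more careful than the paper's write-up in handling the trivial components $\{v^-\}$ and $\{v^+\}$ created for $v\in S$, but the substance of the argument is the same.
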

\begin{proof}
  Given an instance $(G,k,s)$ of {\sc Bounded Size Strong Component Vertex Deletion}, create a digraph $G'$ from $G$ by splitting each vertex $v\in V(G)$ into two vertices $v^+,v^-$, adding the arc from~$v^-$ to $v^+$, and connecting all in-neighbors $u$ of $v$ in $G$ by $k + 1$ parallel arcs from $u^+$ to $v^-$ in~$G'$, and all out-neighbors $u$ of $v$ in $G$ by $k + 1$ parallel arcs from $v^+$ to $u^-$ in~$G'$.
  In other words, we set
  \begin{eqnarray*}
  V(G') & = & \{v^+,v^-~|~v\in V(G)\},\\
  A(G') & = & \{(v^-,v^+)~|~v\in V(G)\}\\
  &   &\cup \{(u^+,v^-)^{k+1}~|~u\in N_G^-(v),v\in V(G)\}\\
  &   &\cup \{(v^+,u^-)^{k+1}~|~u\in N_G^+(v),v\in V(G)\} \enspace .
\end{eqnarray*}
We further set $k' = k$ and $s' = 2s$.
Then $(G',k',s')$ is an instance of {\sc Bounded Size Strong Component Arc Deletion}.
It remains to check solution-wise equivalence.

In the forward direction, let $S$ be a set of at most $k$ vertices such that in $G - S$ every strong component has at most $s$ vertices.
Let $S' = \{(v^-,v^+)~|~v\in S\}$ be the corresponding set of $k$ arcs in $G'$.
The number of vertices in each strong component of $G' - S'$ is now exactly twice the number of vertices in its corresponding component in $G - S$. Therefore, every strong component of $G' - S'$ consists of at most $s' = 2s$ vertices.

In the backward direction, let $S'$ be a set of at most $k$ arcs such that in $G' - S'$ every strong component has at most $s' = 2s$ vertices.
We first argue that we can change $S'$ in such a way that it will only consist of arcs of the form $(v^-,v^+)$ for some vertex $v\in V(G)$.
This is clear if we use the trick with $k + 1$ parallel arcs.
Else, we have to argue as follows:
For suppose there is an arc $(v^+,u^-)\in S'$ for some vertices $v^+,u^-\in V(G')$ corresponding to distinct vertices $v,u\in V(G)$.
Then for $S'' = S'\setminus\{(v^+,u^-)\}\cup\{(v^-,v^+)\}$, vertices $v^+,u^-$ do not belong to the same strong component of $G' - S''$ since~$v^+$ is a source of $G' - S''$.
Therefore, for each vertex $v^{\pm}\in V(G')$ the size of the (uniquely determined) strong component in $G - S''$ containing $v^{\pm}$ is at most the size of the strong component in $G' - S'$ containing $v^{\pm}$.
This justifies the assumption that $S' = \{(v^-,v^+)~|~v\in V(G)\}$.)
Now let $S = \{v\in V(G)~|~(v^-,v^+)\in S'\}$ be the set of at most $k$ vertices in $G$ corresponding to the arcs in $S'$.
The number of vertices in each strong component of $G - S$ is now exactly half the number of vertices in its corresponding component in $G' - S'$.
Therefore, every strong component of $G - S$ consists of at most $s = s'/2$ vertices.
\end{proof}

\subsection{1-Out Regular Deletion:\PDFcompatibleLineBreak Polynomial Parameter Transformation from Arc to Vertex Version}
\label{sec:1outregulararctovertex}
Last but not least, we show a transformation from {\sc 1-Out-Regular Arc Deletion} to {\sc 1-Out-Regular Vertex Deletion}.
Thus, by the fixed-parameter tractability of the vertex deletion version as shown in Theorem~\ref{thm:2regulareuleriandeletion_main}, we obtain fixed-parameter tractability of the arc deletion version.
Note that this reduction, unlike the others is parameter preserving.

\begin{lemma}
	\label{thm:oneoutregular_arctovertex}
	Given an instance $(G,k)$ of {\sc 1-Out-Regular Arc Deletion}  we can compute in polynomial time a solution-wise equivalent instance $(G', k')$ of {\sc 1-Out-Regular Vertex Deletion} with $k' = k$.
\end{lemma}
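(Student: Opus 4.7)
The plan is to take $G'$ to be the \emph{line digraph} of $G$: I set $V(G') = A(G)$ so that each arc of $G$ (including parallel copies) becomes a distinct vertex of $G'$, and add an arc $(a,a') \in A(G')$ for every pair of arcs $a = (x,u)$, $a' = (u,y)$ of $G$ that share an intermediate vertex $u \in V(G)$. I set $k' = k$. The construction is clearly computable in polynomial time, and as the correspondence between solutions I take the identity, reading a set $S$ both as a subset of $A(G)$ and as a subset of $V(G')$. The single fact that makes the equivalence work is that $G' - S$ is exactly the line digraph $L(G-S)$ of $G-S$ for every $S \subseteq A(G) = V(G')$.

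First I would establish a structural correspondence between the strong components of an arbitrary digraph $H$ and those of its line digraph $L(H)$: the non-trivial strong components of $L(H)$ are in one-to-one correspondence with the non-trivial strong components of $H$, via the map $C \mapsto C^L := \{(u,v) \in A(H) : u,v \in V(C)\}$. This follows from the natural bijection between walks of length at least one in $H$ and walks in $L(H)$, since two arcs of $H$ lie on a common cycle in $L(H)$ iff their endpoints lie in a common non-trivial strong component of $H$. Arcs of $H$ that do not lie on any directed cycle form trivial strong components in $L(H)$ and can safely be ignored for the $1$-out-regularity analysis.

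Next I would prove the local degree identity: for any $a = (x,u) \in C^L$, the out-neighbours of $a$ inside $C^L$ are exactly the arcs $(u,w) \in A(H)$ with $w \in V(C)$, so the out-degree of $a$ inside $C^L$ equals the out-degree of $u$ inside the induced subdigraph $H[V(C)]$, counted with multiplicity. Consequently $C^L$ is $1$-out-regular in $L(H)$ iff every vertex of $C$ has out-degree exactly one inside $C$, i.e., iff $C$ itself is $1$-out-regular. Applying this with $H = G - S$ yields that every non-trivial strong component of $G - S$ is $1$-out-regular iff every non-trivial strong component of $G' - S$ is $1$-out-regular, so $S$ of size at most $k$ is an arc solution for $(G,k)$ iff it is a vertex solution for $(G',k')$.

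The step I expect to require the most care is the strong-component correspondence itself, in particular verifying that no non-trivial strong component of $L(H)$ spuriously merges arcs from two distinct strong components of $H$, and that trivial strong components do not hide a $1$-out-regularity violation on the other side. The former point is handled by the walk correspondence, since any walk in $L(H)$ joining two arcs lifts to a walk in $H$ between their endpoints, forcing those endpoints to share a strong component; the latter is automatic because the $1$-out-regularity condition is vacuous on singleton components. Once these points are in place the equivalence of the two instances, and hence the lemma, is immediate.
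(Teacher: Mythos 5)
Your proposal is correct and uses the same core construction as the paper: taking $G'$ to be the directed line graph of $G$ and identifying arc subsets of $G$ with vertex subsets of $G'$ via the identity. The paper also relies on the correspondence between closed walks in $H$ and $L(H)$ to match up strong components, and then argues each direction of the equivalence by contradiction (an out-degree-two vertex in a purported cycle, or an extra walk attached to a cycle). Your ``local degree identity''--- that the out-degree of an arc $(x,u)$ inside $C^L$ equals the out-degree of $u$ inside $H[V(C)]$ --- is a slightly cleaner way to package the same facts and gives both directions at once, rather than the two separate contradiction arguments in the paper, but it is not a genuinely different route; the mathematics is the same. The one point worth making explicit if you write this up is the verification that $C\mapsto C^L$ really is a bijection onto the non-trivial strong components of $L(H)$ (both that $C^L$ is itself a strong component and that every non-trivial strong component of $L(H)$ arises this way), which you correctly flag as the step requiring care and which follows from the walk-lifting argument you sketch.
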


\begin{proof}
Let $G'$ be the directed line graph of $G$, that is $G'$ has a vertex~$v_a$ for every arc $a \in A(G)$ and the arc $(v_a, v_b)$ exists in $G'$ if and only if $a = (u,v)\in A(G)$ and $b = (v,w)\in A(G)$ for some $u,v,w \in V(G)$.

Obviously, there is a one-to-one correspondence between arcs in $G$ and vertices in $G'$.
This also holds if there is a set $S$ of arcs in $G$ and $S'$ its corresponding set of vertices in~$G'$ for the digraphs $G - S$ and $G' - S'$.
The correspondence also holds for non-trivial strong components as a closed walk on vertices $v_1, \dots, v_t$ and arcs $a_1, \dots, a_t$ corresponds to a closed walk on the vertices $v_{a_1}, \dots, v_{a_t}$ in $G'$.
We now show the solution-wise equivalence of the instances.

For the forward direction, let $S$ be a solution to $(G, k)$.
Let $S' = \{v_a~|~a \in S\}$.
As $|S'| = |S| \leq k$, our candidate fulfills the size bound.
Let now $X'$ be a strong component of $G' - S'$.
Assume for contradiction that $X'$ is neither trivial nor $1$-out-regular.
By above correspondence there is a non-trivial strong component $X$ in $G - S$ that has the arcs which~$X'$ possesses as vertices.
As $S$ is a solution to $(G, k)$, $X$ is $1$-out-regular (as it is not trivial).
Therefore, $G[X]$ forms a cycle $O$.
This cycle has a correspondent cycle $O'$ in $G'[X']$.
Since $O$ visits all arcs of $G[X]$, $O'$ is a Hamiltonian cycle for~$G'[X']$.
As $G'[X']$ is not $1$-out-regular, there must be an arc $(v_a, v_b) \in E(G'[X'])$ which is not part of~$O'$.
This arc means that the arcs~$a$ and~$b$ share a vertex~$v$ in~$G[X]$ albeit being not adjacent in $O$.
Thus,~$v$ has out-degree at least two in~$G[X]$, a contradiction.
Therefore, $S'$ is a solution to $(G', k)$.

For the reverse direction, let $S'$ be a solution to $(G',k)$.
Let $S = \{a \in A(G)~|~v_a \in S'\}$.
Again we have $|S| = |S'| \leq k$ and thus the size bound fulfilled.
Let $X$ be a strong component in~$G - S$.
Assume, for sake of contradiction, that $X$ is neither trivial nor $1$-out-regular.
This means that $G[X]$ contains a cycle $O_X$ and a (possibly closed) walk $P$ with both endpoints on~$O_X$ and its interior disjoint of it.
Let $x$ be the start vertex of $P$ and $a$ the first arc of~$P$.
Furthermore, let $b = (v, x)$ and $c = (x,w)$ be the arcs adjacent to $x$ on $O$.
Then $G' - S'$ contains the vertices $v_a, v_b, v_c$, and by preservation of strong connectivity they are in the same connected component of $G' - S'$.
But by choice of $a,b,c$ the arcs $(v_b, v_a)$ and $(v_b, v_c)$ exist in $G' - S'$.
This means that $v_b$ has out-degree at least two in its strong component in $G' - S'$, a contradiction.
In conclusion,~$S$ must be a solution for $(G, k)$.
\end{proof}

\bibliographystyle{abbrvnat} 
\bibliography{main}

\end{document}